\def\eg{{\em e.g.,}\xspace}
\def\ie{{\em i.e.,}\xspace}
\def\etc{{\em etc.}\xspace}
\def\wrt{{\em w.r.t.}\xspace}
\theoremstyle{plain}
\newtheorem{theorem}{Theorem}[section]
\newtheorem{lemma}[theorem]{Lemma}
\theoremstyle{definition}
\theoremstyle{remark}
\definecolor{deepred}{rgb}{0.631,0.102,0.102}
\def\cater{CATER\xspace}
\newcommand{\minus}{\scalebox{0.75}[1.0]{$-$}}
\def\figref#1{Figure~\ref{#1}}
\def\Figref#1{Figure~\ref{#1}}
\def\tabref#1{Table~\ref{#1}}
\def\Tabref#1{Table~\ref{#1}}
\def\Secref#1{Section~\ref{#1}}
\def\eqref#1{(\ref{#1})}
\def\Eqref#1{Equation~\ref{#1}}
\def\1{\bm{1}}
\def\vzero{{\bm{0}}}
\def\vone{{\bm{1}}}
\def\vc{{\bm{c}}}
\def\vy{{\bm{y}}}
\def\mI{{\bm{I}}}
\def\mM{{\bm{M}}}
\def\mP{{\bm{P}}}
\def\mQ{{\bm{Q}}}
\def\mW{{\bm{W}}}
\def\mX{{\bm{X}}}
\definecolor{myred}{RGB}{215,48,39}
\definecolor{mygreen}{RGB}{26,152,80}
\title{\cater: Intellectual Property Protection on Text Generation APIs via Conditional Watermarks}
\author{
Xuanli He\thanks{Equal contributions. Most of the work was finished when X.H was at Monash Unversity.} \\ University College London \\ zodiac.he@gmail.com \And Qiongkai Xu \footnotemark[1] \\ University of Melbourne \\ qiongkai.xu@unimelb.edu.au \And Yi Zeng  \\ Virginia Tech \\yizeng@vt.edu\AND Lingjuan Lyu\thanks{Corresponding author} \\ Sony AI \\ Lingjuan.Lv@sony.com \And Fangzhao Wu \\ Microsoft Research Asia \\ fangzwu@microsoft.com \And Jiwei Li \\ Shannon.AI, Zhejiang University \\ jiwei\_li@shannonai.com \And Ruoxi Jia \\ Virginia Tech \\ ruoxijia@vt.edu \\
}
\begin{document}

\maketitle

\begin{abstract}
Previous works have validated that text generation APIs can be stolen through imitation attacks, causing IP violations. In order to protect the IP of text generation APIs, a recent work has introduced a watermarking algorithm and utilized the null-hypothesis test as a post-hoc ownership verification on the imitation models. However, we find that it is possible to detect those watermarks via sufficient statistics of the frequencies of candidate watermarking words. To address this drawback, in this paper, we propose a novel Conditional wATERmarking framework (\cater) for protecting the IP of text generation APIs. An optimization method is proposed to decide the watermarking rules that can minimize the distortion of overall word distributions while maximizing the change of conditional word selections. Theoretically, we prove that it is infeasible for even the savviest attacker (they know how CATER works) to reveal the used watermarks from a large pool of potential word pairs based on statistical inspection. Empirically, we observe that high-order conditions lead to an exponential growth of suspicious (unused) watermarks, making our crafted watermarks more stealthy. In addition, \cater can effectively identify the IP infringement under architectural mismatch and cross-domain imitation attacks, with negligible impairments on the generation quality of victim APIs. We envision our work as a milestone for stealthily protecting the IP of text generation APIs.
\end{abstract}

\section{Introduction}
Nowadays, many technology corporations, such as Google, Amazon, Microsoft, have invested a plethora of workforce and computation to data collection and model training, in order to
deploy well-trained commercial models as pay-as-you-use services on their cloud platforms. Therefore, these corporations own the intellectual property (IP) of their trained models. Unfortunately, previous works have validated that the functionality of a victim API can be stolen through imitation attacks, which inquire the victim with carefully designed queries and train an imitation model based on the outputs of the target API. Such attacks cause severe IP violations of the target API and stifle the creativity and motivation of our research community~\cite{tramer2016stealing, wallace2020imitation, Krishna2020Thieves, he2021model}. 

In fact, imitation attacks work not only on laboratory models, but also on commercial APIs \citep{wallace2020imitation, xu2021beyond}, since the enormous commercial benefit allures competing companies or individual users to extract or steal these successful APIs. For instance, some leading companies in NLP business have been caught imitating their competitors' models~\citep{singhal2011google}. Beyond imitation attacks, the attacker could potentially surpass victims by conducting unsupervised domain adaptation and multi-victim ensemble~\citep{xu2021beyond}. 

\begin{figure}
    \centering
    \includegraphics[width=0.9\textwidth]{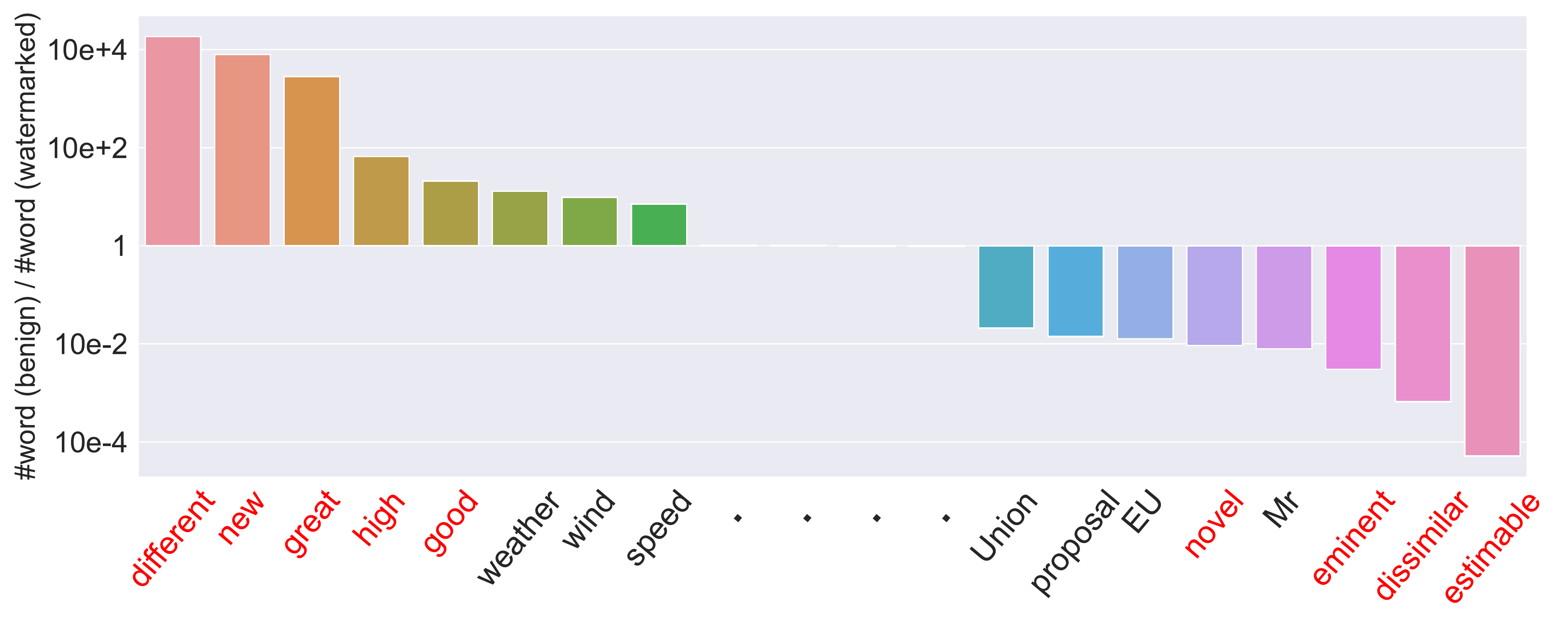}
    \caption{Ratio change of word frequency of top 100 words between benign and watermarked corpora used by~\cite{he2022protecting}, namely $P_b(w)/P_w(w)$. \textcolor{red}{Red} words are the selected watermarks. Although we only list 16 words having the most significant ratio change in the benign and watermarked corpora and omit the rest of them for better visualization, all watermarks are within the top 100 words.}
    \label{fig:word_dist}
    \vspace{-5mm}
\end{figure}

In order to protect the IP of victim models, He et al.~\cite{he2022protecting} first introduced a watermarking algorithm to text generation and utilized the null-hypothesis test as a post-hoc ownership verification on the imitation models. However, traditional watermarking methods generally distort the word distribution, which could be utilized by attackers to infer the watermarked words via sufficient statistics of the frequency change of candidate watermarking words. As an example shown in~\Figref{fig:word_dist}, the replaced words and their substitutions are those with most frequency decrease ratios and frequency increase ratios, respectively. To address this drawback, we are motivated to develop a more stealthy watermarking method to protect the IP of text generation APIs. The stealthiness of the new watermarks is achieved by incorporating high-order linguistic features as conditions that trigger corresponding watermarking rules.

Overall, our main contributions are as follows\footnote{Code and data are available at: \url{https://github.com/xlhex/cater_neurips.git}}:
\begin{itemize}[topsep=0pt, partopsep=0pt, leftmargin=10pt, parsep=0pt, itemsep=1.75pt]
    \item We propose a novel Conditional wATERmarking framework (\cater) for protecting text generation APIs. An optimization method is proposed to decide the watermarking rules that can \textit{i)} minimize the distortion of overall word distributions, while \textit{ii)} maximize the change of conditional word selections. 
    \item Theoretically, we prove that a small number of the used watermarks could be blended in and camouflaged by
    a large number of suspicious watermarks when attackers attempt to inverse the backend watermarking rules. 
    \item Empirically, we observe that high-order conditions lead to the exponential growth of suspicious (unused) watermarks, which encourages better stealthiness of the proposed defense with little hurt to the generation of victim models.
\end{itemize}
\section{Preliminary and Background}
\subsection{Imitation Attack} 
An imitation attack (a.k.a model extraction) aims to emulate the behavior of the victim model $\mathcal{V}$, such that the adversary can either sidestep the service charges or launch a competitive service~\citep{tramer2016stealing,Krishna2020Thieves,wallace2020imitation,he2021model,xu2021beyond}. Malicious users can achieve this goal through interaction with the victim model $\mathcal{V}$ without knowing its internals, such as the model architecture, hyperparameters, training data, \etc Adversaries first craft a set of queries $Q$ based on the documentation of a target model. Then $Q$ will be sent to $\mathcal{V}$ to obtain the corresponding predictions $Y$. Finally, an imitation model $\mathcal{S}$ can be attained by learning a function to map $Q$ to $Y$.


Most prior imitation attacks are limited to classification tasks~\citep{tramer2016stealing,orekondy2019knockoff,he2021model}. The imitation for text generation, a crucial task in natural language processing, has been under-developed until recently. Inspired by the efficacy of sequence-level knowledge distillation~\citep{kim2016sequence}, Wallace et al.~\cite{wallace2020imitation} and Xu et al.~\cite{xu2021beyond} propose mimicking the functionally of commercial text generation APIs. Similar to the standard imitation attack, adversaries can query $\mathcal{V}$ with $Q$. For generation tasks, $Y$ is a sequence of tokens $(y_1,...,y_L)$, where $L$ is the length of the sequence. According to their empirical studies, one can rival the performance of these APIs, which poses a severe threat to cloud platforms.

\subsection{Identification of IP Infringement}
Prior works have utilized watermarking avenues to achieve a post-hoc verification of the ownership
~\citep{uchida2017embedding,li2020protecting,lim2022protect}. However, this line of work assumes the model owners can watermark victim model $\mathcal{V}$ by altering its neurons before releasing $\mathcal{V}$ to end-users. This operation is not feasible for the imitation attack, as $\mathcal{V}$ cannot access the parameters of $\mathcal{S}$. The only thing under the control of $\mathcal{V}$ is the responses to adversaries. Hence, some recent works propose creating a backdoor to $\mathcal{S}$ during the interaction with attackers~\citep{Krishna2020Thieves,szyller2021dawn}. Specifically, $\mathcal{V}$ can select a small fraction of queries and answer them with incorrect predictions, in a similar way to the popular choice of watermarks in the computer vision domain (adopting some arbitrary features as the trigger to evaluate) \citep{guo2020fine,zeng2021rethinking}. 
Erroneous predictions are so abrupt that $\mathcal{S}$ will memorize these outliers~\citep{dai2019backdoor,kurita2020weight}. As such, $\mathcal{V}$ can utilize these watermarks as evidence of ownership. 

Albeit the efficacy, the drawbacks of backdoor approaches are tangible as well. First, since $\mathcal{V}$ does not impose regulations on users' usage, one cannot distinguish a malicious user from a regular user based on their querying behaviors\footnote{\url{https://cloud.google.com/translate/pricing}}. Thus, $\mathcal{V}$ has to fairly serve all users and store all mislabeled queries, which leads to a massive storage consumption and a negative impact on the users' experiences. Moreover, as the identity of imitation models is unknown to $\mathcal{V}$, $\mathcal{V}$ has to iterate over all the mislabeled queries, which is computationally prohibitive. Finally, as $\mathcal{S}$ tends to adopt the pay-as-you-use policy for the sake of profits, the brute-force interaction with $\mathcal{S}$ can cause drastic financial costs.

As a remedy, He et al.~\cite{he2022protecting} utilize a lexical watermark to identify IP infringement brought by imitation attacks. They point out that a neat watermarking algorithm must follow two principles: \textit{i}) it cannot significantly impair customer experience, and \textit{ii}) it should not be reverse-engineered by malicious users. In order to fulfill these requirements, they first select a set of words $\mathcal{W}$ from the training data of the victim model $\mathcal{V}$. Then for each $w\in \mathcal{W}$, they find $R \minus 1$ semantically equivalent substitutions for it. Next, they employ $\mathcal{W}$ and their substitutions $\mathcal{T}$ to compose watermarking words $\mathcal{M}$. Finally, they replace $\mathcal{W}$ with $\mathcal{M}$. The rationale behind this avenue is to alter the distribution of words such that the imitation model can learn this biased pattern. To verify such a biased pattern of the word choice, He et al.~\cite{he2022protecting} employ a null hypothesis test ~\citep{rice2006mathematical} for evaluation. 

More concretely, He et al.~\cite{he2022protecting} utilize an evaluation set $O$ to conduct the null hypothesis test. They formulate the null hypothesis as: \textit{the tested model generates outputs without preference for watermarks}. A null hypothesis can be either rejected or accepted via the calculation of a p-value~\cite{rice2006mathematical}. They assume that all words $\{w_i | w_i \in \mathcal{W}\cup \mathcal{T}\}$ follow a binomial distribution $Pr(k;n,p)$, where $k$ is the number of words in $\mathcal{M}$ appearing in $O$, $n$ is the number of words in $\mathcal{W}\cup \mathcal{T}$ found in $O$, and $p$ is the probability of watermarks observed in the natural language. According to their algorithm, $p$ is approximated by $1/R$. Now, one can compute the p-value from as follows: 
\begin{align}
    \mathcal P = 2 \cdot \mathrm{min}(Pr(X\geq k), Pr(X\leq k))  
    \label{equ:two_tails}
\end{align}
The p-value indicates how one can confidently reject the hypothesis. Lower p-value suggests that the tested model should be more likely subject to an imitator.

\subsection{Watermark Removal}
In conjunction with model watermarking, there is a growing body of investigations on watermark removal~\cite{uchida2017embedding, 10.1145/3433210.3453079, yan2022and}. This line of work aims to erase watermarks embedded in white-box deep neural networks. We argue that these approaches are implausible for our setting, as text generation APIs are black-box to attackers.

Moreover, one can dub watermarking into a form of data poisoning~\cite{chen2017targeted, kurita2020weight, xu2021targeted, wang2021putting}, in which one can utilize trigger words to manipulate the behavior of the victim model. A list of works has investigated how to mitigate the adverse effect caused by data poisoning in natural language process (NLP) tasks. Qi et al.~\cite{qi2021onion} have shown that GPT2 can effectively identify trigger words targeting the corruption of text classifications. It has been demonstrated that one can use influence graphs as a means of the remedy for data poisoning on various NLP tasks~\cite{sun2021general}
\begin{figure}
   \centering
  \includegraphics[width=0.7\linewidth]{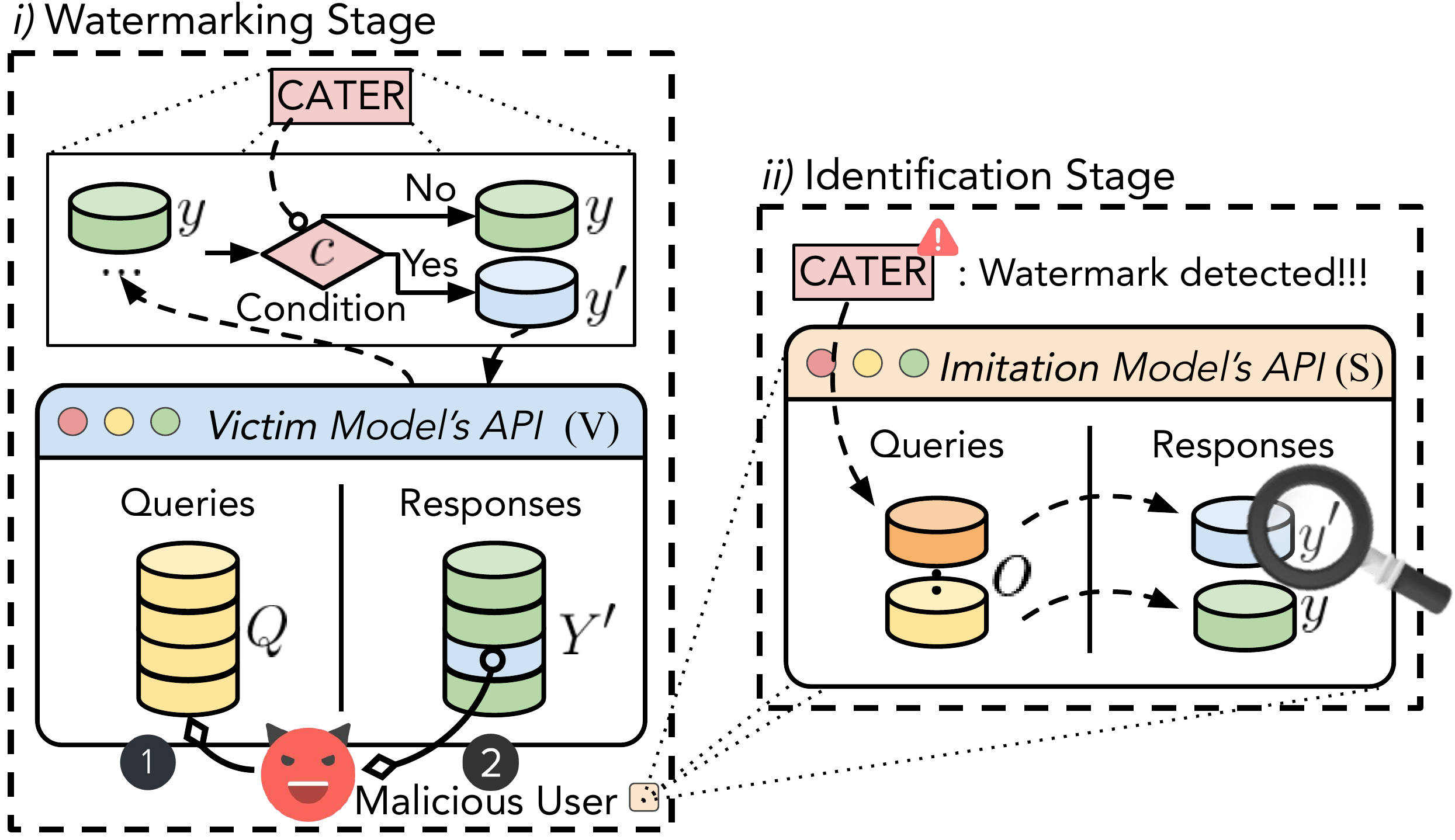}
  \caption{The workflow of \cater IP protection for Generation APIs. \cater first watermarks some of the responses from victim APIs (left). Then, \cater identifies suspicious attacker's API by watermark verification (right). 
}
  \label{fig:flowchart}
  \vspace{-6mm}
\end{figure}

\section{CATER}
\label{sec:cond}
This section introduces our proposed \textit{\cater}, leveraging conditional watermarks to watermark the imitation model, which can be served as a post-hoc identification of an IP infringement. 
\Figref{fig:flowchart} provides an overview of \cater, consisting of two stages: \textit{i)} \textit{Watermarking Stage} and \textit{ii)} \textit{Identification Stage}. 

\paragraph{\textit{i)} Watermarking Stage:} The victim API model $\mathcal{V}$ employs \cater to add conditional watermarks to the intended responses. When the vanilla victim model receives queries $Q=\{q_i\}_{i=1}^{|Q|}$ from an end-user, $\mathcal{V}$ initially produces a tentative answer $Y=\{y_i\}_{i=1}^{|Q|}$. Next, $\mathcal{V}$ utilizes \cater to conduct a watermarking procedure over $Y$ according to the watermarking rules, given condition $c$. Finally, $\mathcal{V}$ replies to the end-user with a watermarked response $Y'\{y'_i\}_{i=1}^{|Q|}$.

\paragraph{\textit{ii)} Identification Stage:} If a model $\mathcal{S}$ is under suspicion, the victims can query the suspect using a verification set $O=\{o_i\}_{i=1}^{|O|}$. After obtaining the responses $Y=\{y_i\}_{i=1}^{|O|}$ from $\mathcal{S}$, $\mathcal{V}$ can leverage \cater to testify whether $\mathcal{S}$ violates the IP right of $\mathcal{V}$.





\subsection{Watermarking Rule Optimization}
\label{sec:watermark_opt}
Watermarking some words to a deterministic substitutions could distort the overall word distribution. Therefore, some watermarks could be reversely inferred and eliminated by analyzing the word distribution, as demonstrated in \Figref{fig:word_dist}.
We propose to inject the watermarks in conditional word distribution, while maintaining the original word distribution.
The substitutions can be conditioned on linguistic features as illustrated in~\Figref{fig:rules}. 
Remarkably, given a condition $c\in\mathcal{C}$ and 
a group of semantically equivalent words $\mathcal{W}$, one can replace any words $w\in \mathcal W$ with each other. 
We formulate the objective of conditional watermarking rules as:
\begin{equation}
    \min_{\hat{P}(w|c)} \underbrace{\mathbb D \big(\sum_{c\in \mathcal C}\hat{P}(w|c)P(c), \sum_{c\in \mathcal C} P(w|c)P(c) \big)}_{\text{I: indistinguishable objective}} 
    -\frac{\alpha}{|\mathcal C|}\underbrace{ \sum_{c\in \mathcal C}\mathbb D \big(\hat{P}(w|c),  P(w|c)\big)}_{\text{II: distinct objective}}
    \label{eq:objective_concept}
\end{equation}

The two factors reflect two essential desiderata:
\begin{enumerate}
    \item For each $\mathcal{W}$, with $w \in \mathcal W$, the overall word distributions before optimization $P(w)=\sum P(w|c)P(c)$ and after optimization $\hat{P}(w)=\sum \hat{P}(w|c)P(c)$ should be close to each other, as the \textit{indistinguishable objective} in \Eqref{eq:objective_concept};
    \item For a particular condition $c \in \mathcal C$, 
    the conditional word distributions should still be distinct to their original distributions, reflected by the dissimilarity between $P(w^{(i)}|c)$ and $\hat{P}(w^{(i)}|c)$, as the \textit{distinct objective} in \Eqref{eq:objective_concept}. This guarantees the conditional watermarks are identifiable in verification.
\end{enumerate}
In practice, we utilize multiple synonym word sets as a group $\mathcal{G}=\{\mathcal{W}^{(i)}\}_{i=1}^{|\mathcal{G}|}$.
For each $\mathcal W^{(i)}$, we can formulate \Eqref{eq:objective_concept} as a mixed integer quadratic programming using $\ell_2$-norm as distance measurement function:
\begin{align}
\min_{\mW} & \ (\mW\vc - \mX\vc)^T(\mW\vc - \mX\vc) - \frac{\alpha}{|\mathcal C|}\mathrm{Tr}\bigl((\mW-\mX)^T(\mW-\mX)\bigr) \notag\\
\mathrm{s.t.} & \ \mX^T \cdot \vone_{|\mathcal W^{(i)}|} = \vone_{|\mathcal C|}, \mX \in \{0,1\}^{|\mathcal W^{(i)}| \times |\mathcal C|}
   \label{eq:objective_matrix}
\end{align}
We define matrix $\mX=[\hat{P}(w^{(i)}|c)]_{|\mathcal W^{(i)}| \times |\mathcal C|}$ as 
the variables for optimization. Matrix $\mW=[P(w^{(i)}|c)]_{|\mathcal W^{(i)}| \times |\mathcal C|}$ and vector
$\vc=[P(c)]_{|\mathcal C| \times 1}$ are constant variables, decided by calculating corresponding distributions in a large training corpus.
The objective of Equation~\ref{eq:objective_matrix} is convex when $\alpha$ is sufficiently small (see the proof in Appendix~\ref{append:proof-convex-obj}).  We optimize the watermark assignments $\hat{P}(w^{(i)}|c)$ using Gurobi~\cite{gurobi} with $\alpha=0.01$.

\subsection{Constructing Watermarking Conditions using Linguistic Features}
\label{sec:cond_rule}
This part will concentrate on practical ways to construct the watermarking conditions $\mathcal C$. We consider two fundamental linguistic features $\mathcal F$, \textit{i)} part-of-speech and \textit{ii)} dependency tree, and their high-order variations as conditions. Such linguistic features were widely and successfully used for text classification~\citep{cohen2004learning,xu2012using}, sequence labeling~\cite{lafferty2001conditional,NIPS2004_semi_markov_crf}, and \etc

\begin{figure}[h]
    \centering
    \includegraphics[width=0.75\textwidth]{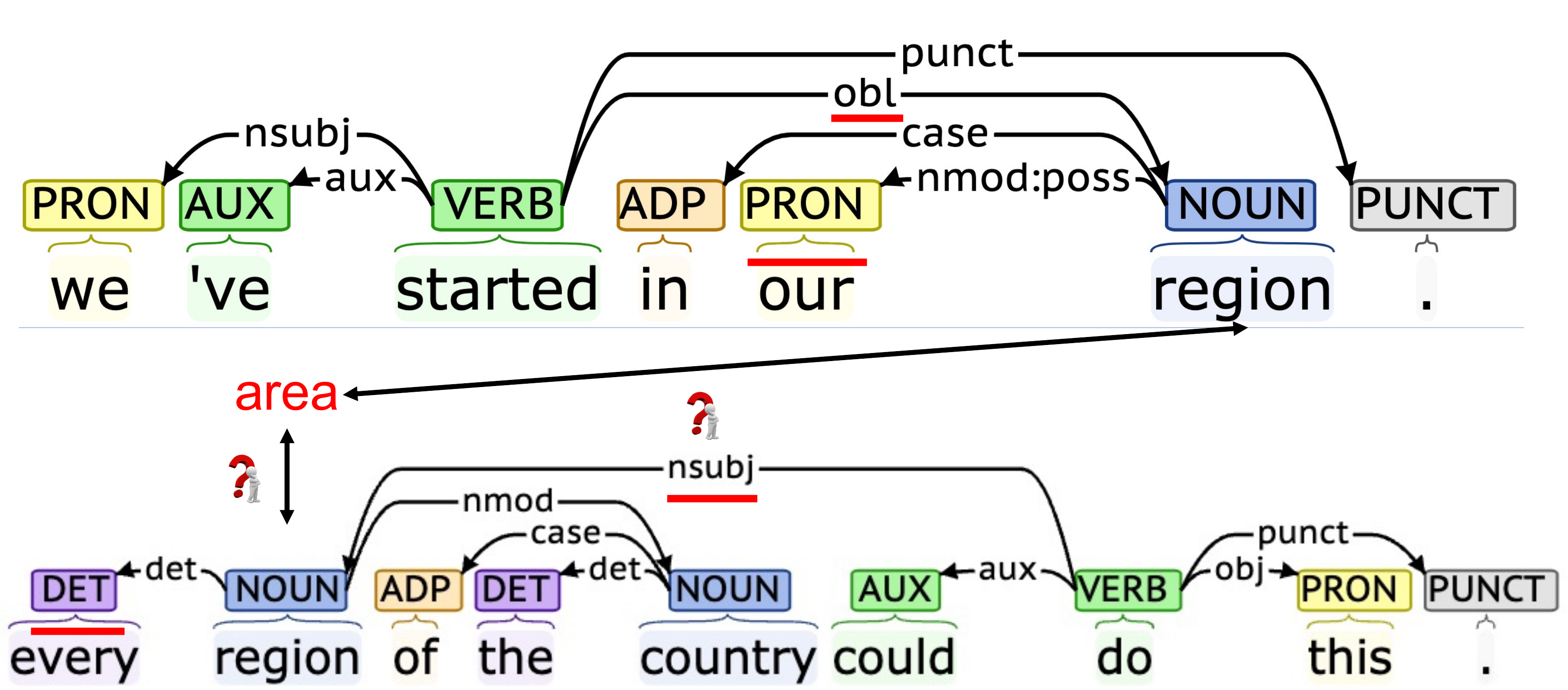}
    \caption{The part-of-speech (POS) tags and dependency relations are illustrated in colored boxes and arcs, respectively. The decision of using ``region'' or its synonym (``area'' or other words), is conditioned on its linguistic features in context. For example, in the first sentence, either ``\underline{PRON}'' (POS label of the preceding token) or ``\underline{obl}'' (DEP label of the incoming arc) gives the decision to replace ``region'' with ``area'', as opposed to ``\underline{DET}'' or ``\underline{nsubj}'' in the second sentence.}
    \label{fig:rules}
    \vspace{-4mm}
\end{figure}



\paragraph{Part-of-Speech} Part-of-Speech (POS) tagging is a grammatical grouping algorithm, which can cluster words according to their grammatical properties, such as syntactic and morphological behaviors~\citep{daniel2008}. The POS tag for each token is demonstrated in a colored box, in \Figref{fig:rules}.

Given a word $w$ in a sentence, we denote its POS as $l_0$, and use $l_{-k}$ or $l_{+k}$ to represent the POS of the $k$-th word to the left or right of $w$. We consider a single label $l_{-1}$ as our first-order condition. In order to reduce the identifiability of our conditional watermark, we can construct high-order conditions from the same feature set, \textit{e.g.}, $(l_{-1}, l_{+1})$ as second-order condition and $(l_{-2}, l_{-1}, l_{+1})$ as third-order condition. Note that if $l_{-k}$ or $l_{+k}$ does not exist, we use a pseudo tag ``[none]" by default. Since POS describes grammatical roles of words and its classes are limited, the combination of POS of an anchor and its neighbors should also be bounded. Thus one can consider the POS bond among words as the condition.

\paragraph{Dependency Tree} Dependency Tree (DEP) is a syntactic structure, 
which describes directed binary grammatical relations between words~\citep{daniel2008}, as shown in~\Figref{fig:rules}. A dependency tree can be represented by an acyclic directed graph $G=(V, E)$, where $V$ is a set of vertices corresponding to all words in a given sentence, $E$ is a set of ordered pairs of vertices, denoted as \textit{arcs}. An arc $e \in E$ describes a grammatical relation between two vertices in $V$, \ie source vertex named as \textit{head} and target vertex coined as \textit{dependent}. Except the root vertex, each vertex is connected to by exactly one head. Consequently, there exists a unique path from each vertex to the root node in a dependency tree. 


Analagously for POS features, we can design first-order and high-order DEP features as watermarking conditions. Given a word $w$, and its incoming DEP arc, we use the DEP label of the arc as the first-order features $(d_1)$. We construct high-order condition recursively using the labels of incoming DEP arcs $(d_1, d_2, \cdots)$. A pseudo arc label ``[none]" is used when there is no parent node in recursion.
\subsection{Identifiability of Conditional Watermark}
\label{sec:proof_of_unidentifiabilit}

In this section, we discuss the identifiability of our watermark if the attackers attempt to infer the used watermarks. We assume the worst case that the attackers have access to \textit{i)} the watermarking algorithm, \textit{ii)} all possible word sets for substitution $\mathcal G$, and \textit{iii)} combination of feature sets $\mathcal F$ as wartermarking conditions $\mathcal C$. However, the exact watermarking rules are unobservable to attackers. The attackers may identify the watermark rules by suspecting those observed $P(w^{(i)}|c)$ with extreme distributions, \textit{i.e.}, only a single word in a synonym set is selected given a specific condition.


Given a limited budget, we assume that an imitator has queried our watermarked API and has acquired $N$ tokens in $\bigcup_{i} \mathcal W^{(i)}$. The system has incorporated watermarks with $K$-order  features $c \in \mathcal{C}$, where $\mathcal{C} = (\mathcal{F}_1, \mathcal{F}_2, \cdots \mathcal{F}_K)$, the total number of possible conditions is $|\mathcal{C}|=\prod_{i=1}^{K} |\mathcal{F}_i|$. We simplify our discussion by using the same feature set $\mathcal{F}$ (POS or DEP), then $|\mathcal{C}|=|\mathcal{F}|^K$.


\begin{theorem} \label{thm:discuss_sparsity}
If $|\mathcal{F}|^K > N$ and there exist $t$ conditions that have less or equal to $m \in \mathbb Z^+$ support samples, then $t \geq |\mathcal{F}|^K-N/(m+1)$.
\end{theorem}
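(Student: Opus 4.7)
The statement is essentially a pigeonhole/counting bound, so my plan is to set up the count of samples over the $|\mathcal{F}|^K$ conditions and then argue that the conditions with ``many'' support samples cannot be too numerous.

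Concretely, I would let $t$ be the number of conditions $c \in \mathcal{C}$ with support count (number of observed tokens whose context matches $c$) at most $m$, and let $t' = |\mathcal{F}|^K - t$ be the number of conditions with support count strictly greater than $m$, i.e., at least $m+1$. The total number of observed tokens across all conditions is exactly $N$, since every token in $\bigcup_i \mathcal W^{(i)}$ falls under exactly one condition $c \in \mathcal{C}$. Summing over all conditions and discarding the nonnegative contribution from the ``low support'' ones gives
\begin{equation*}
N \;=\; \sum_{c \in \mathcal C} \#\{\text{samples matching } c\} \;\geq\; \sum_{c\text{ with }>m\text{ samples}} (m+1) \;=\; t'(m+1).
\end{equation*}
Rearranging yields $t' \leq N/(m+1)$, so $t = |\mathcal{F}|^K - t' \geq |\mathcal{F}|^K - N/(m+1)$, which is the claim.

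The hypothesis $|\mathcal{F}|^K > N$ is not used to drive the inequality itself, but to guarantee that the bound is meaningful: it ensures $|\mathcal{F}|^K - N/(m+1) > 0$ at least for $m = 0$, so that the lower bound on $t$ is nontrivial and the pigeonhole conclusion that ``many conditions must have few samples'' is informative for the downstream security argument. I would mention this briefly after the main counting step.

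There is no real obstacle here; the argument is one line of pigeonhole once the bookkeeping is stated cleanly. The only place to be careful is in phrasing the partition of the $N$ samples across conditions so that the reader sees the inequality $N \geq t'(m+1)$ is a genuine lower bound (tokens are disjointly assigned to conditions by their context, and matched samples of ``low support'' conditions are simply dropped from the sum). Once that is transparent, the rearrangement to $t \geq |\mathcal F|^K - N/(m+1)$ is immediate.
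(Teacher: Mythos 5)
Your argument is correct and is essentially the paper's own proof: both partition the $|\mathcal{F}|^K$ conditions into the $t$ with at most $m$ support samples and the $|\mathcal{F}|^K - t$ with at least $m+1$, lower-bound the total count $N$ by $(|\mathcal{F}|^K - t)(m+1)$, and rearrange. Your additional remark that $|\mathcal{F}|^K > N$ only serves to make the bound nontrivial is a fair observation consistent with how the paper uses the theorem.
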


The attacker would suspect conditional word distributions that are \textit{extremely imbalanced}, namely only a single dominant choice of word is observed within the responses to the attacker.
\begin{theorem} \label{thm:discuss_imbalance}
Having $m$ support samples for a specific condition $c$, the possibility of observing extremely imbalanced word choice is 
$\mathcal I(\mathcal W, c, m)=\sum_{w_i \in \mathcal W} P(w_i|c)^m$. If $m' \leq m$ and $m, m' \in \mathbb Z^{+}$, $\mathcal I(\mathcal W, c, m') \geq \mathcal I(\mathcal W, c, m)$. 
\end{theorem}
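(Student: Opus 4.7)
The plan is to address the two assertions of the theorem in turn, since the statement bundles together a combinatorial/probabilistic identity and a simple monotonicity claim.

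For the first part, I would start by unpacking what ``extremely imbalanced word choice'' means in this context: among the $m$ tokens observed under condition $c$, all of them coincide with a single word $w_i \in \mathcal W$. The $m$ draws from the synonym set $\mathcal W$ under a fixed condition $c$ are i.i.d.\ categorical with probabilities $P(w_i\mid c)$ (this is the natural generative model implicitly assumed when we speak of ``support samples'' for $c$). So for each fixed $w_i$, the probability that every one of the $m$ draws equals $w_i$ is $P(w_i\mid c)^m$. The events ``all draws equal $w_i$'' for different $i$ are pairwise disjoint whenever $m\geq 1$ (a single sequence of draws can be constant-$w_i$ for at most one $i$), so by countable additivity the probability of the union is $\sum_{w_i \in \mathcal W} P(w_i\mid c)^m$, which is exactly $\mathcal I(\mathcal W, c, m)$.

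For the monotonicity claim, I would argue termwise. Each factor $P(w_i\mid c)$ lies in $[0,1]$, so the map $m \mapsto P(w_i\mid c)^m$ is non-increasing in $m$ over the positive integers: if $m' \leq m$, then $P(w_i\mid c)^{m'} \geq P(w_i\mid c)^{m}$ (with equality exactly when $P(w_i\mid c)\in\{0,1\}$ or $m'=m$). Summing the inequality over all $w_i\in \mathcal W$ preserves the direction and yields
\begin{equation}
\mathcal I(\mathcal W, c, m') = \sum_{w_i \in \mathcal W} P(w_i\mid c)^{m'} \;\geq\; \sum_{w_i \in \mathcal W} P(w_i\mid c)^{m} = \mathcal I(\mathcal W, c, m),
\end{equation}
as required.

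There is no substantive obstacle here; the only step that deserves any care is making the i.i.d.\ sampling model underlying ``$m$ support samples for a specific condition $c$'' explicit, so that the event ``all $m$ draws coincide with $w_i$'' indeed has probability $P(w_i\mid c)^m$ and the events are disjoint across $i$. Once that probabilistic setup is stated, both claims reduce to a one-line computation and an elementary property of powers of numbers in $[0,1]$. I would keep the proof short and emphasize this modeling assumption, since it is the only place where an attacker-side interpretation of ``extremely imbalanced'' could differ (e.g.\ if one instead defined ``nearly imbalanced,'' the sum would be replaced by a tail sum but the same monotonicity argument would apply verbatim).
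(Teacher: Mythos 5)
Your proof is correct, and it is worth noting where it diverges from the paper's. For the monotonicity claim the paper proceeds by recursion on $m$, showing $\mathcal I(\mathcal W, c, m+1) \leq \mathcal I(\mathcal W, c, m)$ via the bound $\sum_i P(w_i|c)^{m}\, P(w_i|c) \leq \bigl(\sum_i P(w_i|c)^{m}\bigr)\bigl(\sum_i P(w_i|c)\bigr)$ together with the normalization $\sum_i P(w_i|c)=1$; you instead argue termwise, using only that each $P(w_i|c)\in[0,1]$ makes $m \mapsto P(w_i|c)^m$ non-increasing, and then sum. Your route is slightly more elementary and more general (it needs no normalization and handles arbitrary $m' \leq m$ in one step rather than by induction), though both arguments are one-liners. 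You also do something the paper does not: you derive the formula $\mathcal I(\mathcal W, c, m)=\sum_{w_i}P(w_i|c)^m$ as the probability of the disjoint union of the events ``all $m$ i.i.d.\ draws equal $w_i$,'' whereas the paper simply posits this expression and only remarks (before the proof) that the observations are assumed independent because watermarked words are sparse in the corpus. Making that modeling assumption explicit, as you do, is the right instinct, since it is the only substantive hypothesis in the theorem.
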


The proofs of Thm~\ref{thm:discuss_sparsity} and Thm~\ref{thm:discuss_imbalance} can be found in Appendix~\ref{append:proof-sparsity} and \ref{append:proof-imbalance}.
Thm~\ref{thm:discuss_sparsity} guarantees a lower bound for the number of conditions that attackers will have less or equal to $m$ observed samples. Moreover, the lower bound grows exponentially with regard to the feature orders used as watermarking conditions. Thm~\ref{thm:discuss_imbalance} guarantees the high probability of the conditions with extremely imbalanced word selection when $m$ is small. Combining these two theorems, the total number of the suspicious watermark rules could be huge compared with the limited used watermark rules if we are utilizing high-order linguistic features as conditions. We further empirically demonstrate the significant confusion between suspected and used watermark rules in~\Secref{sec:ada_attack}.
\section{Experiments}
\label{sec:all_expr}


\paragraph{Text Generation Tasks.} We examine two widespread text generation tasks: machine translation and document summarization, which have been successfully deployed as commercial APIs.\footnote{\url{https://translate.google.com/}}\footnote{\url{https://deepai.org/machine-learning-model/summarization}}. To demonstrate the generality of \cater, we also apply it to two more text generation tasks: \textit{i)} \textbf{text simplification} and \textit{ii)} \textbf{paraphrase generation}. We present the performance of \cater for these tasks in Appendix~\ref{app:more_tasks}.
\begin{itemize}[topsep=0pt, partopsep=0pt, leftmargin=10pt, parsep=0pt, itemsep=1.75pt]
    \item \textbf{Machine Translation:} We consider WMT14 German (De) \textrightarrow English (En) translation~\citep{bojar-EtAl:2014:W14-33} as the testbed. We follow the official split: train (4.5M) / dev (3,000) / test (3,003). Moses~\citep{koehn-etal-2007-moses} is applied to pre-process all corpora, with a cased tokenizer. We use BLEU~\citep{papineni2002bleu} and BERTScore~\citep{zhang2019bertscore} 
    to evaluate the translation quality. BLEU concentrates on lexical similarity via n-grams match, whereas BERTScore targets at semantic equivalence through contextualized embeddings. 
    \item \textbf{Document summarization:} CNN/DM~\cite{hermann2015teaching} utilizes informative headlines as summaries of news articles. We reuse the dataset preprocessed by See et al.~\cite{see2017get} with a partition of train/dev/test as 287K / 13K / 11K. Rouge~\citep{lin2004rouge} and BERTScore~\citep{zhang2019bertscore} are employed for the evaluation metric of the summary quality.
\end{itemize}
We use 32K and 16K BPE vocabulary~\citep{sennrich2016neural} for experiments on WMT14 and CNN/DM, respectively.

\paragraph{Models.} For the primary experiments, we consider Transformer-base~\citep{vaswani2017attention} as the backbone of both victim models and the imitation models. Following He et al.~\cite{he2022protecting}, we use a 3-layer Transformer for the summarization task. Because of their superior performance, pre-trained language models (PLMs) have been deployed on cloud platforms.\footnote{\url{https://cloud.google.com/ai-platform/training/docs/algorithms/bert}} Hence, we also consider using two popular PLMs: \textit{i)} BART (summarization)~\citep{lewis2020bart} and \textit{ii)} mBART (translation)~\citep{liu2020multilingual} as the victim model. Regarding the imitation model, since the architecture of the victim model is unknown to the adversary, we simulate this black-box setting by using three different architectures as the imitator, namely (m)BART, Transformer-base, and ConvS2S~\citep{gehring2017convolutional}. The training details are summarized in Appendix~\ref{app:training}.

\paragraph{Basic Settings.}
As a proof-of-concept, we start our evaluation with a most straightforward case. We assume the victim model $\mathcal{V}$ and the imitation model $\mathcal{S}$ use the same training data, but $\mathcal{S}$ uses the response $\vy'$ with \cater instead of the ground-truth $\vy$. We set the size of synonyms to 2 and vary this value in Appendix~\ref{app:syn_size}. The detailed construction of watermarks and approximation of $p$ in ~\Eqref{equ:two_tails} for \cater is provided in Appendix~\ref{app:training}.

\paragraph{Baselines.} We compare our approach with~\cite{venugopal-etal-2011-watermarking} and ~\cite{he2022protecting}. Venugopal et al.\cite{venugopal-etal-2011-watermarking} proposed watermarking the generated output with a sequence of bits under the representation of either n-grams or the complete sentence. He et al.~\cite{he2022protecting} devises two effective watermarking approaches. The first one replaces all the watermarked words with their synonyms. The second one watermarks the victim API outputs by mixing American and British spelling systems.

\begin{table*}[t]
    
    \caption{Performance of different watermarking approaches on WMT14 and CNN/DM. We use F1 scores of ROUGE-1, ROUGE-2 and ROUGE-L for CNN/DM.}
    \centering
    \scalebox{0.85}{
    \begin{tabular}{l|ccc|ccc}
      \toprule
      & \multicolumn{3}{c}{\textbf{WMT14}} & \multicolumn{3}{|c}{\textbf{CNN/DM}} \\
      &    \textbf{p-value} $\downarrow$ & \textbf{BLEU} $\uparrow$ &\textbf{BERTScore} $\uparrow$ & \textbf{p-value} $\downarrow$ &\textbf{ROUGE-1/2/L} $\uparrow$ &\textbf{BERTScore} $\uparrow$\\
      \midrule 
      w/o watermark    &  $>10^{-1}$ &31.1 & 65.9 & $>10^{-1}$ & 37.7 / 15.4 / 31.2 & 22.1\\
      \midrule
      \multicolumn{1}{l|}{Venugopal et al.~\cite{venugopal-etal-2011-watermarking}} & & &&&&\\
        \quad - unigram  & $<10^{-2}$& 30.4 &65.2 &$<10^{-2}$  & 37.3 / 15.1 / 31.2 & 21.7\\
        \quad - trigram & $>10^{-1}$& 30.8 & 65.7 & $>10^{-1}$& 37.5 / 15.3 / 31.0 & 21.8\\
        \quad  - sentence & $>10^{-1}$& 30.8 & 65.9& $>10^{-1}$& 37.6 / 15.4 / 31.2 & 21.9\\
        \multicolumn{1}{l|}{He et al.~\cite{he2022protecting}} & & &&&& \\
        \quad - spelling  & $<10^{-13}$& 31.1  &65.8 &$<10^{-8}$ & 37.5 / 15.2 / 31.4 & 22.0\\
      \quad - synonym  & $<10^{-10}$ & 30.8 & 65.5 & $<10^{-8}$ & 37.6 / 15.3 / 31.4 & 21.8\\
      \midrule
      \multicolumn{1}{l|}{\cater (ours)} & & &&&&\\
       \quad - DEP  & $<10^{-4}$ \ & 30.9 & 65.4 & $<10^{-2}$ & 37.6 / 15.3 / 31.3 & 21.8\\
       \quad - POS  & $<10^{-7}$ \ & 30.8 & 65.3 &$<10^{-7}$  & 37.5 / 15.2 / 31.2 & 21.9\\
        \bottomrule
        
    \end{tabular}
    }
    
    \label{tab:main}
    \vspace{-6mm}
\end{table*}

\subsection{Performance of \cater}
\label{sec:expr}

\tabref{tab:main} presents the watermark identifiability and generation quality of studied text generation tasks. Both \cite{he2022protecting} and \cater obtain a sizeable gap in the p-value, and demonstrate a negligible degradation in BLEU, ROUGE, and BERTScore, compared to the non-watermarking baseline. However, \cite{venugopal-etal-2011-watermarking} falls short of injecting detectable watermarks. Although \cater is slightly inferior to \cite{he2022protecting} in p-value, we argue that watermarks in \cite{he2022protecting} can be easily erased, as their replacement techniques are not invisible. As shown in~\figref{fig:word_dist}, the synonyms used by \cite{he2022protecting} can be identified due to the tangible distribution shift on the watermarks, whereas \cater manages to minimize such a shift according to~\Eqref{eq:objective_concept}, which is also corroborated by~\figref{fig:word_dist_pos}. In addition, one can eliminate the spelling watermarks by consistently using one spelling system.

Unless otherwise stated, we use the first-order POS as the default setting for \cater, due to its efficacy in terms of watermark identifiability and generation quality.

\begin{table}[h]
    \centering
    \small
    \caption{Imitation performance of different architectures on clean and watermarked data. Numbers in parentheses are results of clean data. Victim models are trained on mBART (WMT14) and BART (CNN/DM), respectively. We use the first-order POS as the watermarking approach.}
    \begin{tabular}{c|cc|cc}
    \toprule
    \textbf{Model} & \multicolumn{2}{c}{\textbf{WMT14}} & \multicolumn{2}{|c}{\textbf{CNN/DM}}\\ 
     & \textbf{p-value} $\downarrow$ & \textbf{BLEU} $\uparrow$& \textbf{p-value} $\downarrow$ & \textbf{ROUGE-L} $\uparrow$\\
     \midrule
         (m)BART&  $<10^{-4}$ ($>10^{-1}$) & 34.9 (35.2) & $<10^{-5}$ ($>10^{-1}$) &  38.1 (38.1)\\
         Transformer& $<10^{-5}$ ($>10^{-2}$) & 32.7 (33.0) & $<10^{-3}$ ($>10^{-1}$)&  32.8 (32.9)\\
         ConvS2S & $<10^{-5}$ ($>10^{-2}$) & 32.7 (32.9) &$<10^{-3}$ ($>10^{-1}$) & 32.7 (32.7) \\
         \bottomrule
    \end{tabular}
    \label{tab:arch}
    \vspace{-4mm}
\end{table}

\begin{figure}[]
    \centering
    \includegraphics[width=0.85\textwidth]{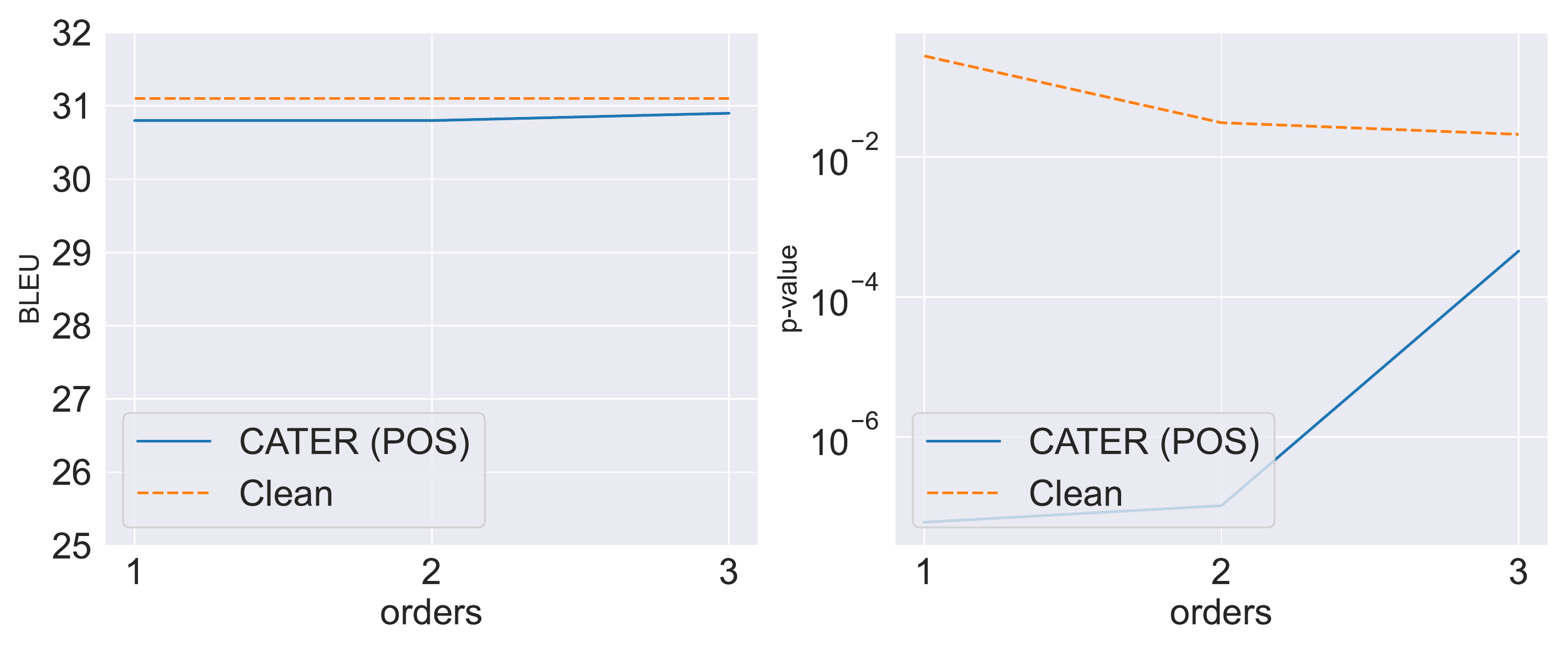}
    \caption{BLEU scores (\textbf{left}) and p-value (\textbf{right}) of using different orders of the POS watermarking approach on WMT14 data. X-axis indicates the orders of conditions. \textit{1} is the first-order condition, \textit{2} is the second-order condition, \textit{3} is the third-order condition. \textbf{Clean} means imitation with clean dataset.}
    \label{fig:orders}
    \vspace{-4mm}
\end{figure}

\paragraph{IP Identification under Architectural Mismatch}
The architectures of remote APIs are usually unknown to the adversary. However, recent works have shown that the imitation attack is effective even if there is an architectural mismatch between the victim model and the imitator~\citep{wallace2020imitation, he2021model}. To demonstrate that our approach is model-agnostic, we use BART-family models as victim models and vary architectures of imitation models. 

\Tabref{tab:arch} summarizes p-value and generation quality of \cater on WMT14 and CNN/DM datasets. Similar to~\tabref{tab:main}, \cater can confidently identify the IP infringement when the architecture of the imitation model is the same as that of the victim model, with a gap of p-value between watermarked model and benign model being 3 orders of magnitude. In addition, this gap applies to the case, where we use distinct architectures for the victim model and the imitator. The generation quality exhibits negligible drops, within a range of 0.3. Note that the generation quality of Transformer and ConvS2S imitators degrades due to the capacity gap, compared to powerful BART-family models.

\begin{wraptable}{r}{6cm}
    \vspace{-3mm}
\caption{Imitation performance of using data from different domains. The victim model is trained on WMT14. We use first-order POS as the watermarking condition.}
    \begin{tabular}{ccc}
    \toprule
    \textbf{WMT14} & \textbf{IWSLT14} & \textbf{OPUS (Law)} \\
    \midrule
    $<10^{-7}$  & $<10^{-5}$  & $<10^{-6}$ \\
   \bottomrule
    \end{tabular}
    \label{tab:domain}
    \vspace{-3mm}
\end{wraptable} 

\paragraph{IP Identification on Cross-domain Imitation}
Similarly, the training data of the victim model is confidential and remains unknown to the public. Thus, there could be a domain mismatch between the training data of the victim model and queries from the adversary. In order to exhibit that our approach is exempt from the domain shift, we use two out-of-domain datasets to conduct the imitation attack for the machine translation task. The first is IWSLT14 data~\citep{cettolo2014report} with 250K German sentences, and the second is OPUS (Law) data~\citep{tiedemann2012parallel} consisting of 2.1M German sentences. \tabref{tab:domain} suggests that despite the domain mismatch, \cater can still watermark the imitation model, and one can identify watermarks with high confidence.

\paragraph{High-order Conditions}
We have shown that the first-order \cater effectively performs various tasks and settings. We argue that \cater is not limited to the first-order condition. Instead, one can use high-order \cater as mentioned in~\Secref{sec:cond_rule}, which can consolidate the invisibility as discussed in~\Secref{sec:proof_of_unidentifiabilit}. Therefore, we investigate the efficacy of the high-order \cater to the translation task and provide the study on the summarization task in Appendix~\ref{app:ho_summ}.

\Figref{fig:orders} shows that with the increase of conditional POS orders, compared to the use of clean data, there is no side effect on the BLEU scores, \ie generation quality. The right figure suggests that using the higher conditional orders can lead to the larger p-value, which means that the claim about IP violation is less confident. However, the gap between the benign and watermarked models it is still significantly distinguishable. Note that for the sake of fair comparison, the p-values of the clean model are calculated \wrt the corresponding order.

\begin{figure}[]
     \centering
     \begin{subfigure}[t]{0.45\textwidth}
         \centering
         \includegraphics[width=\textwidth]{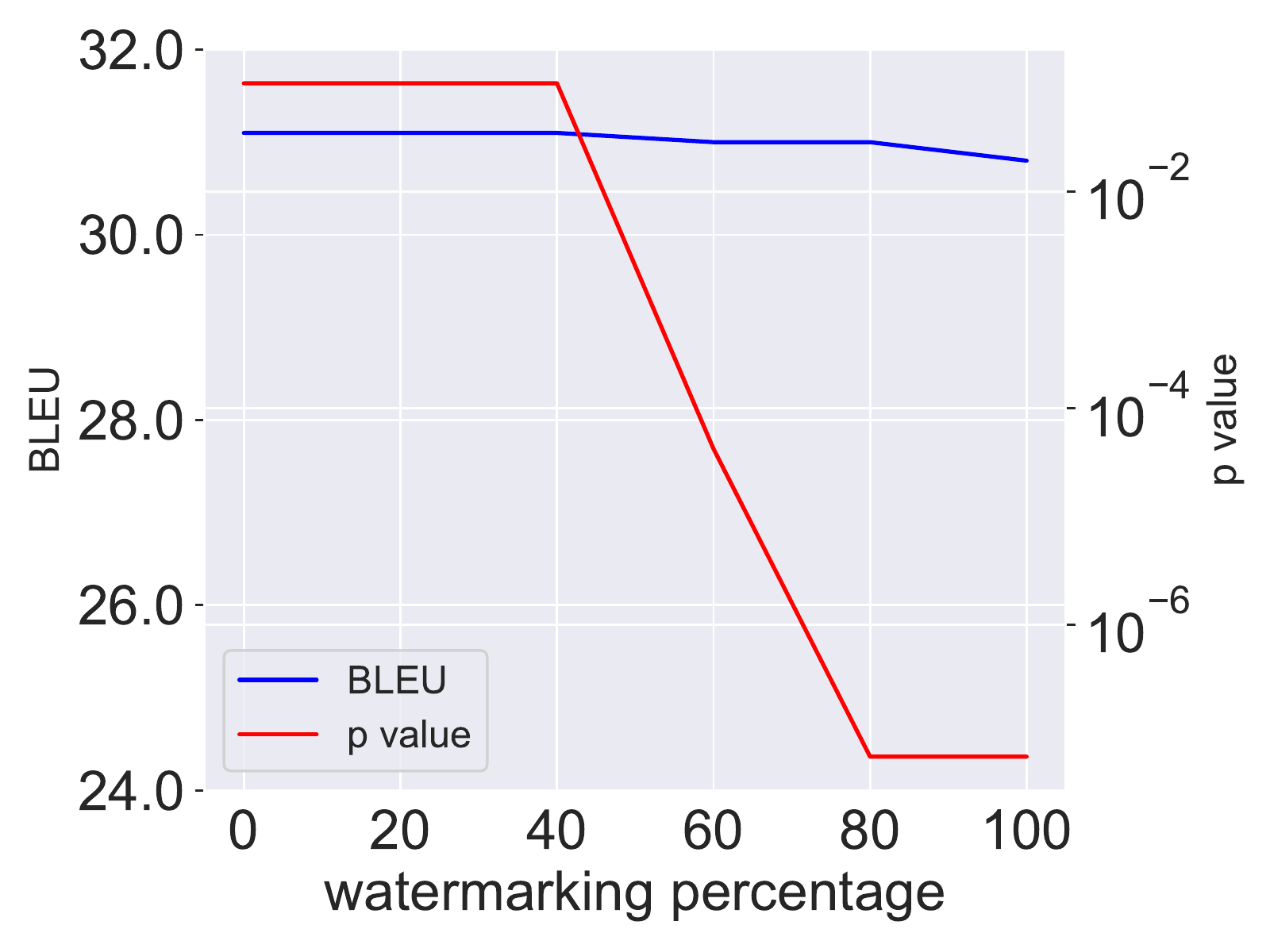}
         \caption{Machine Translation}
     \end{subfigure}
     \begin{subfigure}[t]{0.45\textwidth}
         \centering
         \includegraphics[width=\textwidth]{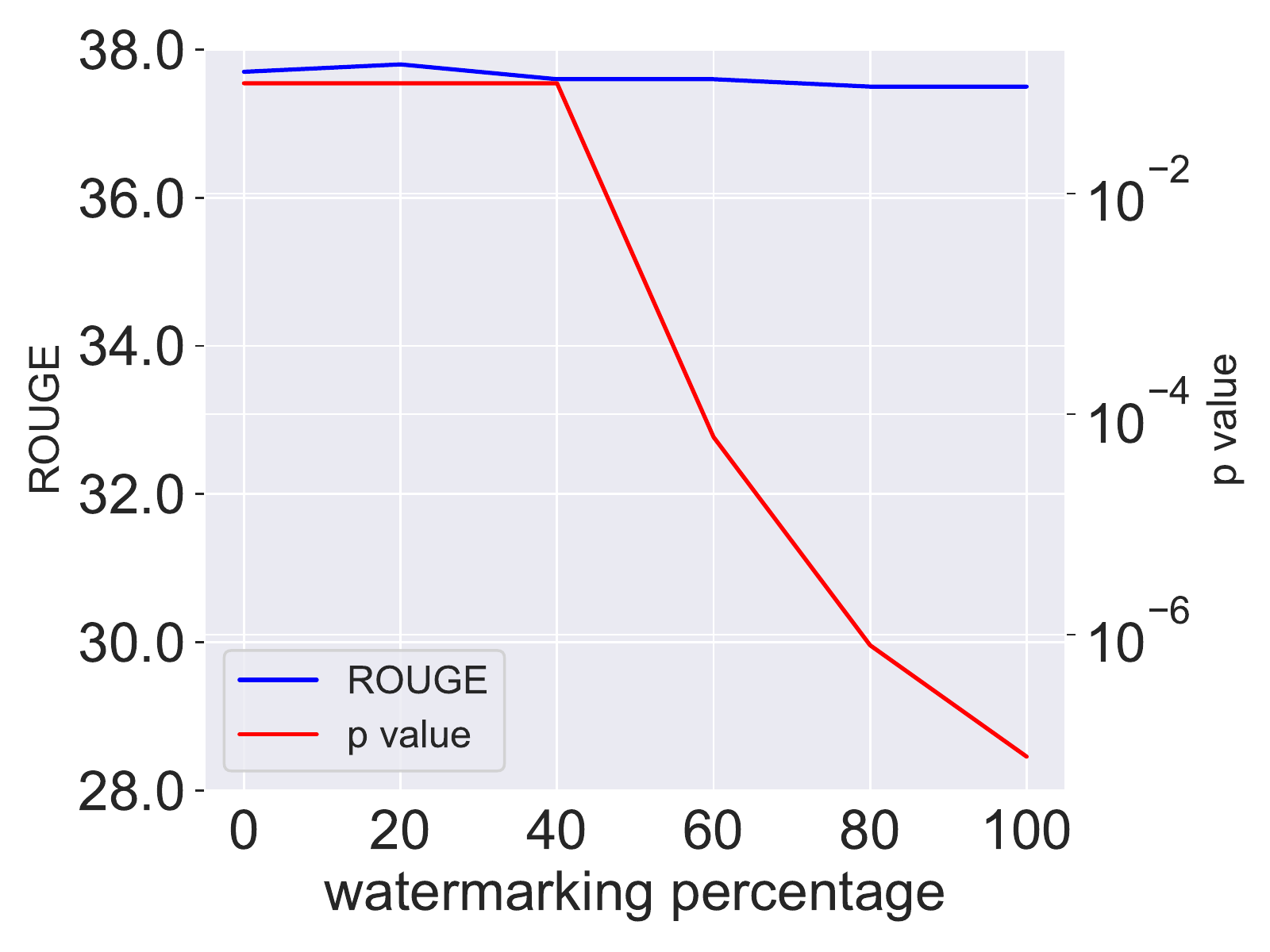}
         \caption{Summarization}
     \end{subfigure}
     \caption{The generation quality and p-value under different percentage of using watermarked data for imitation attacks on machine translation and summarization.}
 \centering

        \label{fig:mix_ratio}
        \vspace{-4mm}
\end{figure}

\paragraph{Mixture of Human- and Machine-labeled Data}
Due to multiple factors, such as noisy inputs~\cite{koehn-knowles-2017-six, belinkov2018synthetic}, domain mismatch~\cite{belinkov2018synthetic,muller-etal-2020-domain}, \etc, training a model with machine translation alone still underperforms using human-annotated data~\cite{xu2021beyond}. However, since annotating data is resource-expensive~\cite{xu2021beyond}, malicious may mix the human-annotated data with machine-annotated one. We examine the effectiveness of \cater under this mixture of two types of datasets.

\figref{fig:mix_ratio} suggests that as \cater aims to minimize the distribution distortion, watermarks injected by \cater tend to be overwritten by clean signals. Thus, \cater is active when more than half of the data is watermarked.


\subsection{Analysis on Adaptive Attacks}
\label{sec:ada_attack}

\begin{table}[]
\small
     \centering
    \caption{Imitation performance with watermark removal on WMT14 data and CNN/DM. We use the first-order POS as the watermarking approach. ONION is used to remove watermarks.}
    \begin{tabular}{c|cc|cc}
    \toprule
    \textbf{Model} & \multicolumn{2}{c}{\textbf{WMT14}} & \multicolumn{2}{|c}{\textbf{CNN/DM}}\\ 
     & \textbf{p-value} $\downarrow$ & \textbf{BLEU} $\uparrow$& \textbf{p-value} $\downarrow$ & \textbf{ROUGE-L} $\uparrow$\\
    \midrule 
    w/o ONION  & $<10^{-7}$ & 30.8 & $<10^{-7}$ & 31.2\\
    w/ \ \ ONION & $<10^{-5}$ & 27.0 &$<10^{-7}$ & 25.9\\
  \bottomrule
    \end{tabular}
    \label{tab:defence}
    \vspace{-6mm}
\end{table}


The previous sections illustrates the efficacy of \cater for watermarking and detecting potential imitation attacks. Given the case that a savvy attacker might be aware of the existence of watermarks, they might launch countermeasures to remove the effects of the watermark. This section explores and analyzes possible adaptive attacks based on varying degrees of prior knowledge of our defensive strategy. Specifically, we examine two types of adaptive attacks that try to erase the effects of the watermark: \textit{i)} \textbf{vanilla watermark removal}, 
and \textit{ii)} \textbf{watermarking algorithm leakage}.


\paragraph{Vanilla Watermark Removal.} Under this setting, we assume the attackers are aware of the existence of watermarks, but they are not aware of the details of the watermarking algorithm. Following such a setting of attacker knowledge, we assume the attacker would adopt an existing watermark removal technique in their vanilla form. To evaluate, we employ ONION, a popular defensive avenue for data poisoning in the natural language processing field, which adopts GPT-2~\citep{radford2019language} to expel outlier words. The defense results are shown in~\Tabref{tab:defence}. We find that ONION cannot erase the injected watermarks. Meanwhile, it drastically diminishes the generation quality of the imitation model.

\begin{wrapfigure}{r}{0.55\linewidth}
 \centering
     \includegraphics[width=0.999\linewidth]{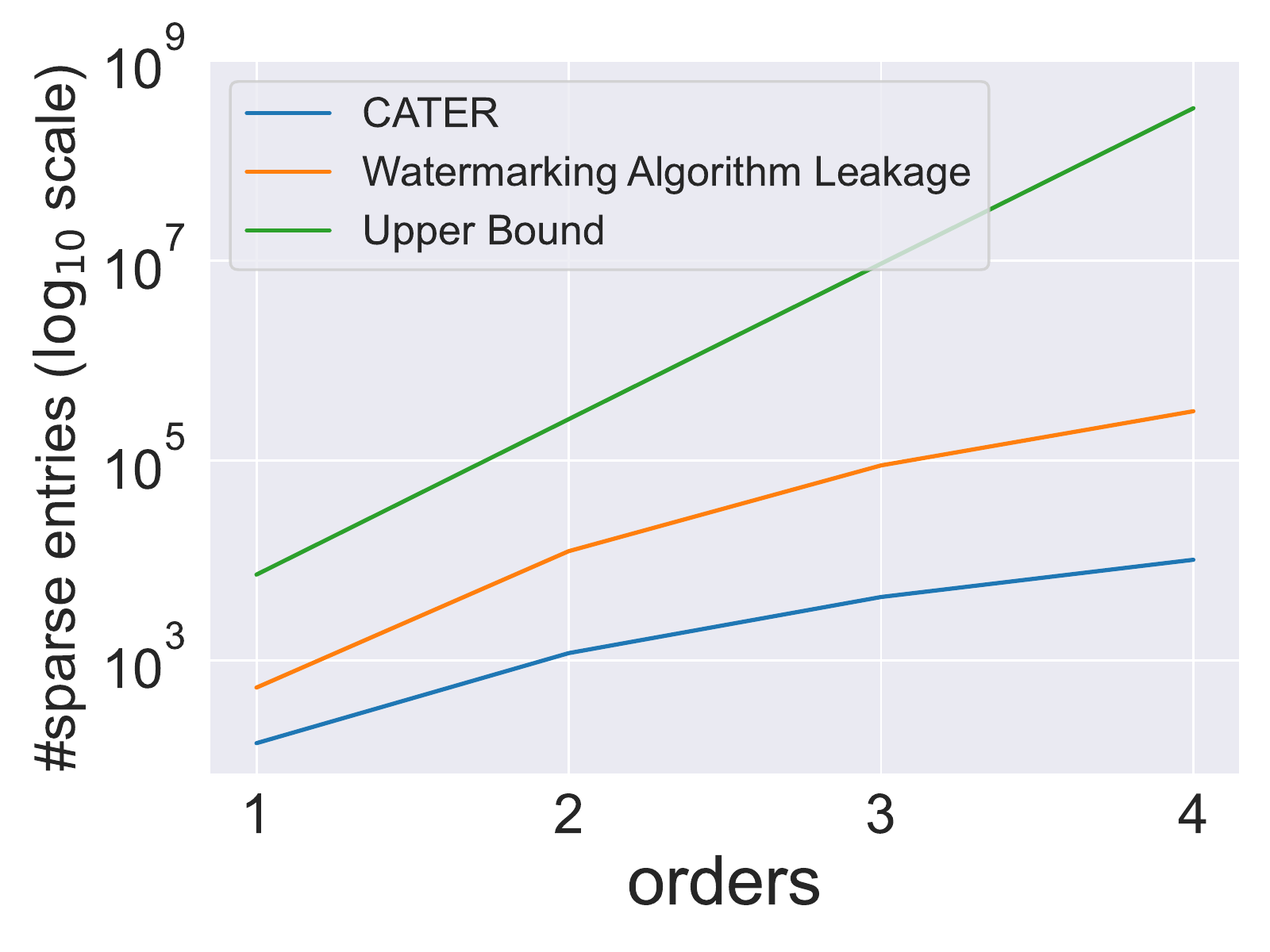}
        \caption{The number of sparse entries (suspected watermarks) of top 200 words with watermarking algorithm leakage under different orders (orange) on the training data. \cater indicates the actual number of watermarks used by our watermarking system (blue). POS feature is used where $|\mathcal F|=36$. The upper bound indicates all possible combinational watermarks (green).}
        \label{fig:unb_entries}
        \vspace{-2mm}
\end{wrapfigure}

\paragraph{Watermarking Algorithm Leakage.}
Under this case study, we assume attackers have access to the full details of our watermarking algorithm, \ie the same watermarking dictionary and the features for constructing watermarking conditions. We note that this is the most substantial attacker knowledge assumption we can imagine, aside from the infeasible case that they know the complete pairs of watermarks we used. After collecting responses from the victim model, the attackers can leverage the leaked knowledge to analyze the responses to find the used watermarks, \ie the number of sparse entries. As shown in \Secref{sec:proof_of_unidentifiabilit}, we theoretically prove that such reverse engineering is infeasible. In addition, \Figref{fig:unb_entries} shows that even with such a strong attacker knowledge, the amount of potential candidate watermarks (orange curve) is still astronomical times larger than the used number of watermarks (blue curve). Thus, malicious users would have difficulty removing watermarks from the responses; unless they lean toward modifying all potential watermarks. 
Such a brute-force approach can drastically debilitate the performance of the imitation attack, causing a feeble imitation. Finally, we demonstrate the upper bound (green curve) to show that without the curated knowledge about the watermarking conditions, the attackers have to consider all possible combinations of POS tags. Therefore, the difficulty of identifying the watermarks from the top 200 words can be combinatorially exacerbated.

\section{Conclusion}
In this work, we are keen on protecting text generation APIs. We first discover that it is possible to detect previously proposed watermarks via sufficient statistics of the frequencies of candidate watermarking words. We then propose a novel Conditional wATERmarking framework (\cater), for which, an optimization method is proposed to decide the watermarking rules that can minimize the distortion of overall word distributions while maximizing the change of conditional word selections. Theoretically, we prove that
it is infeasible for even the savviest attackers, who know how CATER algorithms, to reveal the used watermarks from a large pool of potential watermarking rules based on statistical inspection.
Empirically, we observe that high-order conditions lead to
an exponential growth of suspicious (unused) watermarks, rendering our crafted watermarks more stealthy.

\section*{Limitation and Negative Societal Impacts}
\label{app:limitation}
One major limitation of our work is that one has to find high-quality synonym sets to minimize semantic degradation, leading to the limited option of candidate words. Nevertheless, according to~\Secref{sec:all_expr}, given the top 200 words and their synonyms, \cater can still achieve a stealthy watermarking. In addition, because of the use of the lexical match, we experience slight performance degradation in generation quality. Furthermore, since defending against imitation attacks is difficult, we resort to a post-hoc verification. If the adversaries do not publically release the imitation model, \cater becomes fruitless.



Regarding the negative societal impacts, \cater might be overused by some APIs owners as a means of unfair competition. As shown in~\Figref{fig:orders}, the gap between the benign model and the watermarked one is small. Hence, the APIs owners could leverage \cater to sue innocent cloud services. As a remedy, we suggest the judges refer to a relatively higher bar, \eg lower p-value $<10^{-6}$.

\bibliographystyle{plain}
\bibliography{refs} 

\clearpage
\appendix

\section{Proof of the object in Equation~\ref{eq:objective_matrix} is convex, when $\alpha$ is sufficiently small.} 
\label{append:proof-convex-obj}
To validate this statement, we first prove two factors in the object are convex (Lemma~\ref{lemma:term_convex_1} and Lemma~\ref{lemma:term_convex_2}) and the combination of them keeps the convex property (Lemma~\ref{lemma:combine_PSD}).
\begin{lemma} \label{lemma:term_convex_1}
The quadratic term $\mM_1=(\mW\vc-\mX\vc)^T(\mW\vc-\mX\vc)$ is convex.
\end{lemma}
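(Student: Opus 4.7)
The plan is to recognize that $\mM_1$ is the squared Euclidean norm of an affine function of the optimization variable $\mX$, and then invoke the standard fact that such a composition is convex. First I would fix notation: in \Eqref{eq:objective_matrix} the optimization variable is $\mX$ (with $\mW$ and $\vc$ treated as constants computed from the training corpus), and the relevant domain is (the convex relaxation of) the set of stochastic matrices whose columns lie in the simplex. I would then write $\mM_1(\mX) = \|\mX\vc - \mW\vc\|_2^2 = f(g(\mX))$, where $g(\mX) := \mX\vc - \mW\vc$ is affine in $\mX$ and $f(\vy) := \vy^T\vy$ is the squared $\ell_2$ norm, which is a standard convex function. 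Convexity of $\mM_1$ then follows immediately from the composition rule that $f \circ g$ is convex whenever $f$ is convex and $g$ is affine.

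To make the argument fully self-contained for the reader, I would also sketch a direct Hessian-based verification. Vectorizing $\mX$ as $\vx = \mathrm{vec}(\mX)$ and using the identity $\mX\vc = (\vc^T \otimes \mI_{|\mathcal W^{(i)}|})\,\vx$, let $\mA := \vc^T \otimes \mI_{|\mathcal W^{(i)}|}$. Then
\begin{equation}
\mM_1(\mX) = \vx^T \mA^T\mA\, \vx \;-\; 2(\mW\vc)^T \mA\, \vx \;+\; (\mW\vc)^T(\mW\vc),
\end{equation}
so the Hessian of $\mM_1$ with respect to $\vx$ equals $2\mA^T\mA$, which is positive semidefinite (as a Gram matrix). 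This directly gives convexity and moreover exposes the structure that will be useful when combining with the second term in Lemma~\ref{lemma:combine_PSD}.

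There is no real obstacle here: both routes (composition rule and Hessian computation) are short and essentially mechanical. The only point requiring a little care is that the variable is a matrix rather than a vector, so I would state the affine/composition step and the vectorization identity explicitly to avoid any ambiguity about what ``Hessian of a matrix-valued argument'' means. I would close by noting that the PSD Hessian $2\mA^T\mA$ will later be paired against the indefinite contribution from the distinct objective, which motivates the smallness condition on $\alpha$ used in Lemma~\ref{lemma:combine_PSD}.
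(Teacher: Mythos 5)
Your proof is correct. Your second route (vectorize, write $\mM_1$ as a quadratic in $\mathrm{vec}(\mX)$ with Hessian $2\mA^T\mA$ where $\mA=\vc^T\otimes\mI_{|\mathcal W^{(i)}|}$) is essentially the paper's argument: the paper likewise differentiates twice and obtains the Kronecker-structured Hessian $2(\vc\vc^T)\otimes\mI$. Your first route, the composition rule (squared $\ell_2$ norm of an affine map of $\mX$), is a shorter and entirely standard alternative that the paper does not use. One point in your favor worth highlighting: the paper justifies positive semidefiniteness of the Hessian by noting its trace is nonnegative, which is not a valid criterion in general (a matrix with nonnegative trace need not be PSD); your identification of the Hessian as the Gram matrix $2\mA^T\mA$ (equivalently, as a Kronecker product of the PSD factors $\vc\vc^T$ and $\mI$) is the correct justification and closes that gap. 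Your version also gets the dimension of the identity factor right ($\mI_{|\mathcal W^{(i)}|}$ rather than the paper's $\mI_{|\mathcal C|}$, which does not match the size of $\mathrm{vec}(\mX)$). Either of your two routes suffices on its own.
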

\begin{proof}
We conduct first-order and second-order derivative of $\mM_1$ on $\mX$:
\begin{equation}
    \frac{\partial \mM_1}{\partial \mX} = 2 (\mX\vc-\mW\vc)\vc^T
\end{equation}
\begin{equation}
    \frac{\partial^2 \mM_1}{\partial \mX^2} = 2 \cdot (\vc\cdot\vc^T) \otimes \mI_{|\mathcal C|}
\end{equation}
where $\otimes$ is Kronecker product and $\mI_{|\mathcal C|}$ is an identity matrix. Because $\vc\neq \vzero$, $\vc\cdot\vc^T$ is positive semidefinite, \textit{i.e.}, $\mathrm{Tr}(\vc\cdot\vc^T) \geq 0$, $$\mathrm{Tr}\Big( \frac{\partial^2 \mM_1}{\partial \mX^2} \Big)=2 \cdot \mathrm{Tr}(\vc\cdot\vc^T) \cdot \mathrm{Tr}(\mI_{|\mathcal C|}) \geq 0.$$
Therefore, $\frac{\partial^2 \mM_1}{\partial \mX^2}$ is positive semidefinite and $\mM_1$ is convex.
\end{proof}

\begin{lemma} \label{lemma:term_convex_2}
The quadratic term $\mM_2=\mathrm{Tr}((\mX-\mW)^T(\mX-\mW))$ is convex.
\end{lemma}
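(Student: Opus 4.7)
The plan is to mirror the structure of the proof of Lemma~\ref{lemma:term_convex_1}: compute the first- and second-order derivatives of $\mM_2$ with respect to $\mX$ and show that the Hessian is positive semidefinite. This is a natural route since $\mM_2 = \|\mX - \mW\|_F^2$ is the squared Frobenius norm of an affine function of $\mX$, which is a textbook convex object; the task is just to verify this via matrix calculus in the same style as Lemma~\ref{lemma:term_convex_1} so the two lemmas read consistently.

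First, I would expand the trace entry-wise to get $\mM_2 = \sum_{i,j}(X_{ij} - W_{ij})^2$, which makes the derivatives transparent. Differentiating once yields
\begin{equation}
    \frac{\partial \mM_2}{\partial \mX} = 2 (\mX - \mW).
\end{equation}
Differentiating a second time (treating $\mX$ as vectorized via $\mathrm{vec}(\cdot)$) gives
\begin{equation}
    \frac{\partial^2 \mM_2}{\partial \mX^2} = 2\, \mI_{|\mathcal W^{(i)}|} \otimes \mI_{|\mathcal C|},
\end{equation}
where $\otimes$ is the Kronecker product. Since the Kronecker product of identity matrices is itself an identity matrix (up to reshape) of size $|\mathcal W^{(i)}|\cdot|\mathcal C|$, its trace equals $2\,|\mathcal W^{(i)}|\cdot|\mathcal C| \geq 0$ and all its eigenvalues equal $2 > 0$, so the Hessian is positive semidefinite. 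Thus $\mM_2$ is convex.

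I do not expect any real obstacle here; the only care needed is to keep the matrix-calculus notation consistent with Lemma~\ref{lemma:term_convex_1} so that the subsequent Lemma~\ref{lemma:combine_PSD} can combine $\mM_1$ and $\mM_2$ via nonnegative (or sufficiently small $\alpha$-weighted) combination and conclude convexity of the full objective in Equation~\ref{eq:objective_matrix}.
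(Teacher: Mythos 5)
Your proof is correct and follows essentially the same route as the paper: first- and second-order derivatives of $\mM_2$ in $\mX$, with the Hessian $2\,\mI\otimes\mI$ manifestly positive semidefinite. Your identity-matrix dimensions ($\mI_{|\mathcal W^{(i)}|}\otimes\mI_{|\mathcal C|}$) are in fact more careful than the paper's, which writes $\mI_{|\mathcal C|}\otimes\mI_{|\mathcal C|}$ despite $\mX$ being $|\mathcal W^{(i)}|\times|\mathcal C|$.
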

\begin{proof}
We conduct first-order and second-order derivative of $\mM_2$ on $\mX$:
\begin{equation}
    \frac{\partial \mM_2}{\partial \mX} = 2\mX - 2\mW
\end{equation}
\begin{equation}
    \frac{\partial^2 \mM_2}{\partial \mX^2} = 2 \cdot \mI_{|\mathcal C|} \otimes \mI_{|\mathcal C|}
\end{equation}
Similar to the proof of Lemma~\ref{lemma:term_convex_1}, we have
$$\mathrm{Tr}\Big( \frac{\partial^2 \mM_2}{\partial \mX^2} \Big)=2 \cdot \mathrm{Tr}(\mI_{|\mathcal C|}) \cdot \mathrm{Tr}(\mI_{|\mathcal C|}) \geq 0.$$
Therefore, $\frac{\partial^2 \mM_2}{\partial \mX^2}$ is positive semidefinite and $\mM_2$ is convex.
\end{proof}




\begin{lemma} \label{lemma:combine_PSD}
Given two positive semidefinite matrices $\mP, \mQ \in \mathbb R^{N\times N}$, and a constant $0 \leq \alpha \leq \frac{\lambda_{\min}(\mP)}{\lambda_{\max}(\mQ)}$, $\mP-\alpha \mQ $ is positive semidefinite.\footnote{$\lambda_{i}(\cdot)$ is the $i$-th eigenvalue of a matrix. $\lambda_{\max}(\cdot)$ and $\lambda_{\min}(\cdot)$ respectively represent the maximum and minimum eigenvalues of a matrix.}
\end{lemma}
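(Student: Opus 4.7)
The plan is to prove the statement directly from the Rayleigh quotient characterization of positive semidefiniteness, avoiding any attempt to simultaneously diagonalize $\mP$ and $\mQ$ (which need not commute). Recall that a symmetric matrix $\mA$ is positive semidefinite if and only if $\vx^T \mA \vx \geq 0$ for every $\vx \in \mathbb R^N$, and that for a symmetric matrix the Rayleigh quotient satisfies $\lambda_{\min}(\mA) \|\vx\|^2 \leq \vx^T \mA \vx \leq \lambda_{\max}(\mA) \|\vx\|^2$.

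First I would fix an arbitrary $\vx \in \mathbb R^N$ and expand
\[
\vx^T (\mP - \alpha \mQ) \vx \;=\; \vx^T \mP \vx \;-\; \alpha\, \vx^T \mQ \vx.
\]
Applying the lower Rayleigh bound to the first term and the upper Rayleigh bound to the second term (using $\alpha \geq 0$ to preserve the inequality direction) gives
\[
\vx^T (\mP - \alpha \mQ) \vx \;\geq\; \bigl(\lambda_{\min}(\mP) - \alpha\, \lambda_{\max}(\mQ)\bigr) \|\vx\|^2.
\]
The hypothesis $\alpha \leq \lambda_{\min}(\mP)/\lambda_{\max}(\mQ)$ (assuming $\lambda_{\max}(\mQ) > 0$) then makes the parenthesized coefficient nonnegative, so $\vx^T (\mP - \alpha \mQ) \vx \geq 0$. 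Since $\vx$ was arbitrary and $\mP - \alpha \mQ$ is symmetric (as a real linear combination of symmetric matrices), this is exactly the definition of positive semidefiniteness.

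The only subtle point is the degenerate case $\lambda_{\max}(\mQ) = 0$, where the ratio in the hypothesis is undefined. I would handle this as a brief separate remark: if $\lambda_{\max}(\mQ) = 0$ then $\mQ$ is PSD with all eigenvalues zero, hence $\mQ = \vzero$, so $\mP - \alpha \mQ = \mP$ is PSD for any $\alpha \geq 0$. The main obstacle is essentially notational rather than mathematical — one must resist the temptation to argue via eigenvalues of $\mP - \alpha \mQ$ directly (which would require simultaneous diagonalizability), and instead work uniformly with the Rayleigh quotient bound, which holds regardless of whether $\mP$ and $\mQ$ commute.
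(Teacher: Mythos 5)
Your proof is correct, and it takes a genuinely different — and in fact more watertight — route than the paper's. The paper argues via the trace: it writes $\mathrm{Tr}(\mP-\alpha\mQ)=\sum_i\lambda_i(\mP)-\alpha\sum_i\lambda_i(\mQ)\geq N\bigl(\lambda_{\min}(\mP)-\alpha\lambda_{\max}(\mQ)\bigr)\geq 0$ and then concludes positive semidefiniteness from nonnegative trace plus symmetry. That inference is not valid in general: a symmetric matrix with nonnegative trace can still have a negative eigenvalue (e.g.\ $\mathrm{diag}(2,-1)$), so the paper's argument as written establishes only a necessary condition, not the claim. Your Rayleigh-quotient argument, bounding $\vx^T(\mP-\alpha\mQ)\vx\geq\bigl(\lambda_{\min}(\mP)-\alpha\lambda_{\max}(\mQ)\bigr)\|\vx\|^2\geq 0$ for every $\vx$, works pointwise and therefore genuinely certifies positive semidefiniteness; it also correctly avoids any appeal to simultaneous diagonalization, and your separate treatment of the degenerate case $\lambda_{\max}(\mQ)=0$ (forcing $\mQ=\vzero$) closes the one spot where the stated hypothesis is ill-defined. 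In short, your approach buys actual rigor where the paper's trace computation leaves a gap, at the cost of only a slightly longer argument.
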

\begin{proof}
$\mP$ and $\mQ$ are positive semidefinite indicates that $\forall i\in [\![1..N]\!]$, 
\begin{align}
    0 \leq \lambda_{\min}(\mP) \leq \lambda_{i}(\mP) \leq \lambda_{\max}(\mP), \\
    0 \leq \lambda_{\min}(\mQ) \leq \lambda_{i}(\mQ) \leq \lambda_{\max}(\mQ).
\end{align}
Then, we have
\begin{align}
    \mathrm{Tr}\big(\mP-\alpha \mQ \big) = & \mathrm{Tr}\big(\mP)-\alpha \mathrm{Tr}\big(\mQ \big) \\
    = &\sum_{i=1}^{N} \lambda_{i}(\mP) - \alpha \sum_{i=1}^{N}\lambda_{i}(\mQ) \\
    \geq & N \cdot \big( \lambda_{\min}(\mP) - \alpha \lambda_{\max}(\mQ) \big) \geq  0.
\end{align}
Additionally, since $\mP$ and $\mQ$ are symmetric, $\mP-\alpha \mQ$ is also symmetric. Thus, $\mP-\alpha \mQ$ is positive semidefinite.
\end{proof}
Combining Lemma~\ref{lemma:term_convex_1}, Lemma~\ref{lemma:term_convex_2} and Lemma~\ref{lemma:combine_PSD}, the objective of Equation~\ref{eq:objective_matrix} is convex when $\alpha$ is small.

\section{Proof of Theorem~\ref{thm:discuss_sparsity}.}
\label{append:proof-sparsity}

\begin{proof} There are $t$ conditions that have $[0, m]$ support samples, then the other $|\mathcal{F}|^K-t$ conditions should have $[m+1, +\infty)$ support samples. 
Combining these two cases, we have
\begin{equation}
t \cdot 0 + (|\mathcal{F}|^K-t) \cdot (m+1) \leq N,
\end{equation}
therefore, $$t \geq |\mathcal{F}|^K-N/(m+1).$$
\end{proof}

\section{Proof of Theorem~\ref{thm:discuss_imbalance}.}
\label{append:proof-imbalance}
We assume the observation of the triggered watermark words are independent to each other, as those words are sparsely distributed in our corpus (4 per 1000 words).

\begin{proof} We prove $\mathcal I(\mathcal W, c, m') \geq \mathcal I(\mathcal W, c, m)$ by recursion: $\mathcal I(\mathcal W, c, m+1) \geq \mathcal I(\mathcal W, c, m)$.
\begin{align}
\mathcal I(\mathcal W, c, m+1) & =\sum_{w_i \in \mathcal W} P(w_i|c)^{m+1} \\
            & =\sum_{w_i \in \mathcal W} \big( P(w_i|c)^{m} \cdot P(w_i|c) \big) \\
            & \leq \big( \sum_{w_i \in \mathcal W}  P(w_i|c)^{m} \big) \cdot \big( \sum_{w_i \in \mathcal W} P(w_i|c) \big) \\
            & =  \sum_{w_i \in \mathcal W} P(w_i|c)^{m} = \mathcal I(\mathcal W, c, m).
\end{align}
\end{proof}
Note that a special case of Thm~\ref{thm:discuss_imbalance} is that $\mathcal I(\mathcal W, c, m)=\sum_{w_i \in \mathcal W} P(w_i|c)=1$, when $m=1$.

\section{Additional Experimental Setup}
\label{app:training}
\paragraph{Construction of Watermarks} To build watermarks from two semantically equivalent words, we use top 200 frequent words from the training set as the candidate words. For each word, we use WordNet~\citep{fellbaum2010wordnet} to find its synonyms and build a list of word sets. We notice that word sets include words that are not strictly semantically equivalent. Thus we use a pre-trained Word2Vec~\citep{mikolov2013efficient} to filter out sets with dissimilar words. In addition, to avoid replacement clash, we do not allow any word to appear in more than word set. Eventually, top 50 semantically matching pairs are retained for \cater. For linguistic features, we construct conditions from the POS tags and dependency trees, which are produced by Stanza~\citep{qi2020stanza}. We use \Eqref{eq:objective_matrix} to obtain preliminary watermarks $\{\mX_i\}_{i=1}^{50}$. We sort $\{\mX_i\}_{i=1}^{50}$ in ascending order according to the indistinguishable objective in \Eqref{eq:objective_concept} and choose the top 10 of them as effective watermarks for \cater.

\paragraph{Training Details} We use fairseq~\citep{ott2019fairseq} as our codebase. For the experiments of Transformer base, we train both victim and imitation models for 50 epochs. Following~\cite{vaswani2017attention}, we set the learning rate to $0.0005$ with warmup steps of 4,000. We use a batch of 4,096 tokens per GPU. Then, we decrease the learning rate proportionally to inverse square root of the step number. We follow the training setup used in~\cite{lewis2020bart} and \cite{liu2020multilingual} for BART and mBART. All experiments are conducted on an NVIDIA DGX node with 8 V100 GPUs.

\paragraph{Estimation of Watermarking Probability for \cater}

Given a group of semantically equivalent words $\mathcal{W}^{(i)}$ and the corresponding condition $c$, we denote $w^{(i)}_c$ as a basic unit, which depicts $w^{(i)}$ under the condition of $c$. If the conditional post-watermark distribution $\hat{P}(w^{(i)}|c)$ is $1$ according to our algorithm, we consider $w^{(i)}_c$ as a watermark. Now, given a set of groups $\mathcal{G}=\{\mathcal{W}^{(i)}\}_{i=1}^{|\mathcal{G}|}$, we can find all watermarks and denote them as $\mathcal{M}$. We use $\#(\mathcal{M}, \mathcal D_{tr})$ to represent the count of words in $\mathcal{M}$ appeared in the training data $\mathcal D_{tr}$ of the victim model. Similarly, we denote the count of all candidate words in $\cup_i \mathcal{W}^{(i)}$ as $\#(\cup_i \mathcal{W}^{(i)}, \mathcal D_{tr})$. Finally, the approximated $p$ in Equation~1 for CATER can be computed as: $$p=\frac{\#(\mathcal{M}, \mathcal D_{tr})}{\#(\cup_i \mathcal{W}^{(i)}, \mathcal D_{tr})}$$

\begin{figure}
\begin{subfigure}[b]{\textwidth}
         \centering
         \includegraphics[width=0.9\textwidth]{figures/diff_ratio.pdf}
     \end{subfigure}
     \vspace{2mm}
     \begin{subfigure}[b]{\textwidth}
         \centering
         \includegraphics[width=0.9\textwidth]{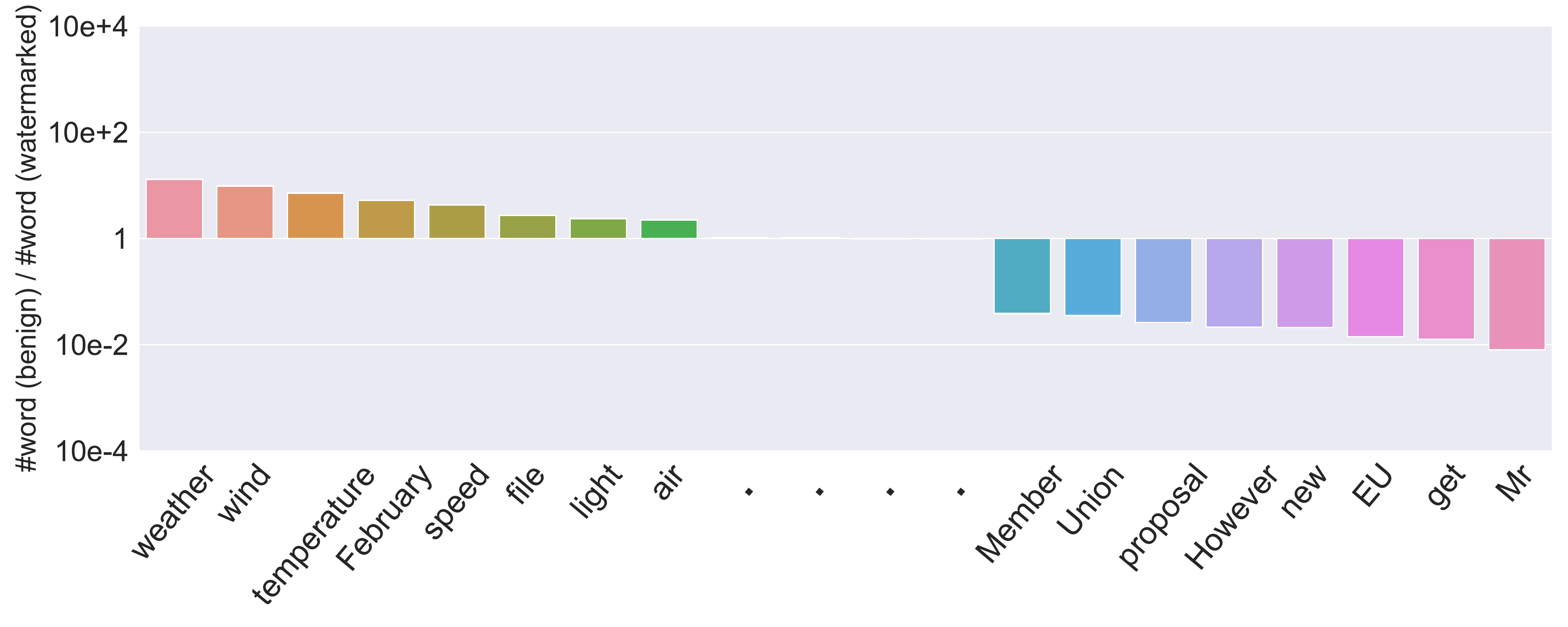}
     \end{subfigure}
    \centering
    \caption{Ratio change of word frequency of top 100 words between benign and watermarked corpora, namely $P_b(w)/P_w(w)$. \textbf{Top}: watermarks are built from \cite{he2022protecting}. \textcolor{red}{Red} words are the selected watermarks. Although we only list 16 words having the most significant ratio change in the benign and watermarked corpora and omit the rest of them for better visualization, all watermarks are within the top 100 words. \textbf{Bottom}: watermarks are built from \cater. There is no significant ratio change for watermarked words.}
    \label{fig:word_dist_pos}
\end{figure}

\section{Word Distribution Shift on Watermarked Response}
To demonstrate that the watermarking algorithm of He et al.~\cite{he2022protecting} can cause a drastic change in word distribution, whereas \cater is able to retain the word distribution, we compare the difference between the watermarked data with a clean one. Since the training data of the victim model is unknown to the malicious users, we randomly select 5M sentences from common crawl data as the benign corpus. Then we obtain the word distribution for the watermarked and benign corpora, respectively, denoted as $P_w$ and $P_b$. Next, we take the union of top 100 words of both watermarked and benign corpora to obtain the suspicious word set $S$. Finally, we can calculate the ratio change of word frequency of each word in $S$ between benign and watermarked corpora, namely $P_b(w)/P_w(w)$.

As shown in~\figref{fig:word_dist_pos}, the watermarks injected by~\cite{he2022protecting} can be easily identified, because of the sizeable word distribution shift. Instead, our approach manages to disguise the watermarks, which leads to more stealthy protection as expected.

\section{Ablation Study}
\subsection{Performance of \cater using Different Sizes of Synonyms}
\label{app:syn_size}
\begin{wraptable}{r}{7cm}
    \centering
    \caption{Watermarking performance of different sizes of synonyms. Numbers in parentheses are results of clean data. We use the first-order POS as the watermarking approach.}
    \begin{tabular}{c|cc}
    \toprule
    \textbf{$|\mathcal W^{(i)}|$} & \multicolumn{2}{c}{\textbf{WMT14}} \\ 
     & \textbf{p-value} $\downarrow$ & \textbf{BLEU} $\uparrow$\\
     \midrule
         2 &  $<10^{-7}$ \ ($>10^{-2}$) & 30.8 (31.1) \\
         3 & $<10^{-10}$ ($>10^{-1}$) & 30.9 (31.1) \\
         4 & $<10^{-14}$ ($>10^{-1}$) & 30.9 (31.1) \\
         \bottomrule
    \end{tabular}
    \label{tab:nsub}
\end{wraptable}
\Secref{sec:expr} shows that using two semantically equivalent words can effectively protect the IP right of the victim model. According to~\Secref{sec:watermark_opt}, \cater can be scaled to multiple semantically equivalent words. Our preliminary experiments show that finding more than 4 interchangeable words is not easy. Thus, we set $|\mathcal W^{(i)}|$ to 2,3 and 4. \Tabref{tab:nsub} shows that with $|\mathcal W^{(i)}|$ increased, \cater becomes more confident in identifying the IP infringement, which is observed in~\cite{he2022protecting} as well. We attribute this phenomenon to the decline of $p$, \ie the chance of hitting the watermarks. Particularly, using more semantically equivalent words means that $p$ can decrease in normal data. Accordingly, the p-value of the watermarked model will drastically drop based on~\Eqref{equ:two_tails}.

\subsection{High-order \cater for Summarization}
\label{app:ho_summ}
This section provides the performance of \cater with high-order conditions on summarization tasks. Similarly, according to~\figref{fig:orders_summ}, the high-order conditions do not have negative impact on the generation quality. Despite the increase in p-value, the high-order \cater is capable of watermarking the imitation model.

\begin{figure}[h]
    \centering
    \includegraphics[width=0.85\textwidth]{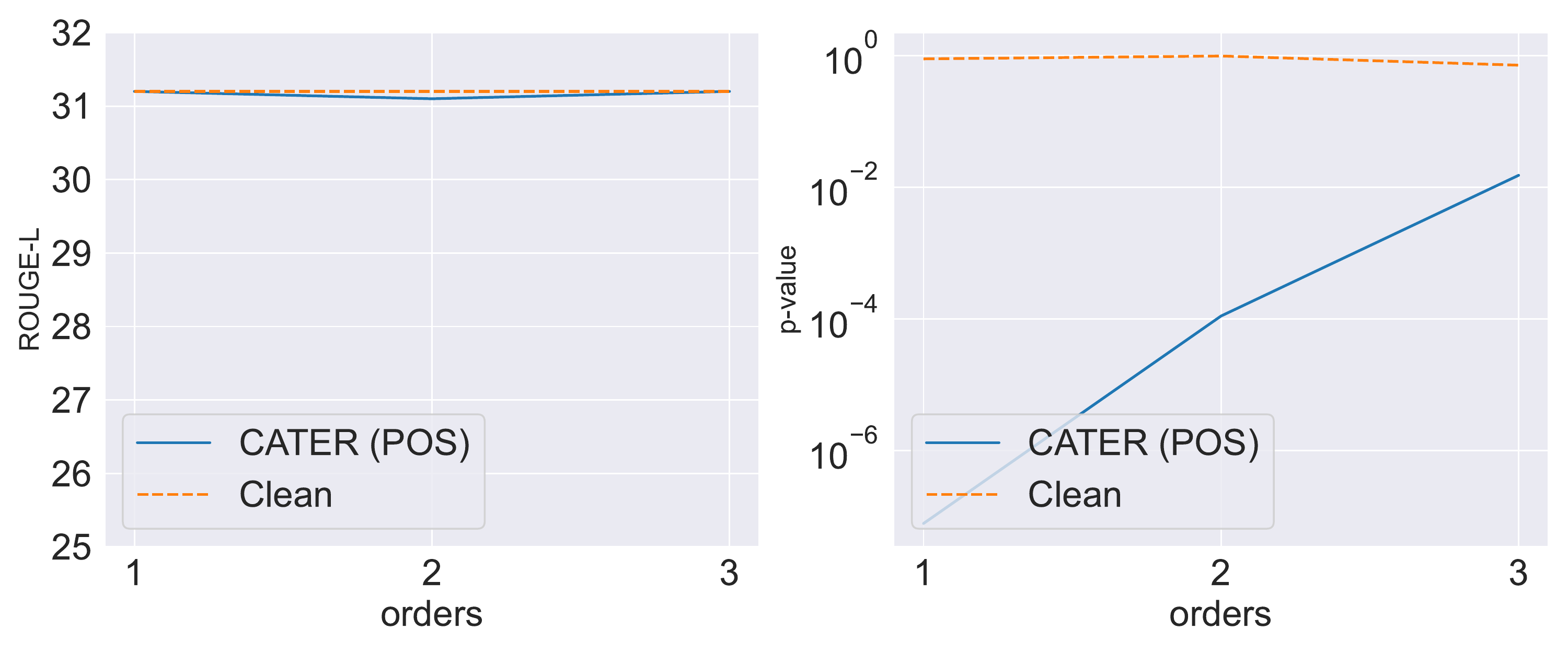}
    \caption{ROUGE-L scores and p-value of using different orders of the POS watermarking approach on CNN/DM data.}
    \label{fig:orders_summ}
\end{figure}

\subsection{Importance of Distinct Objective}
\label{app:dis_obj}
We have discussed the importance of the distinct objective in~\Secref{sec:watermark_opt}. To validate our claim, we conduct an ablation study by removing this factor from our optimization objective. \Tabref{tab:2nd_obj} suggests that we can easily distinguish a watermarked model with distinct object with a benign one, while the watermarked models without distinct objective possess almost the same performance as benign ones ($>10^{-2}$), which corroborates our argument in~\Secref{sec:watermark_opt}.
\begin{table}[]
    \centering
    \caption{Watermarking performance of with (w/) and without (w/o) distinctive objective on WMT14 data. Numbers in parentheses are results of clean data. We use the POS conditions as the watermarking approach.}
    \begin{tabular}{c|cc|cc}
    \toprule
    \textbf{Orders} & \multicolumn{2}{c}{\textbf{w/ distinctive}} & \multicolumn{2}{|c}{\textbf{w/o distinctive}}\\ 
     & \textbf{p-value} $\downarrow$ & \textbf{BLEU} $\uparrow$ & \textbf{p-value} $\downarrow$ & \textbf{BLEU} $\uparrow$\\
     \midrule
         1 &  $<10^{-7}$ ($>10^{-2}$) & 30.8 (31.1) & $>10^{-2}$ ($>10^{-2}$) & 30.9 (31.1)\\
         2 & $<10^{-6}$ ($>10^{-2}$) & 30.9 (31.1) & $>10^{-2}$ ($>10^{-2}$) & 31.0 (31.1)\\
         3 & $<10^{-3}$ ($>10^{-2}$) & 30.9 (31.1) & $>10^{-2}$ ($>10^{-2}$)& 31.0 (31.1)\\
         \bottomrule
    \end{tabular}
    \label{tab:2nd_obj}
\end{table}

\subsection{\cater for More Text Generation Tasks}
\label{app:more_tasks}
To examine the generality of our approach, we run two additional generation tasks: \textit{i)} text simplification and \textit{ii)} paraphrase generation. We use wiki-large data~\cite{zhang-lapata-2017-sentence} for text simplification, and QQP data\footnote{https://www.kaggle.com/c/quora-question-pairs} is used for paraphrase generation~\cite{hosking2022hierarchical}. Similar to machine translation, we use Transformer base as the backbone. According to~\Tabref{tab:more_tasks}, CATER is effective on those tasks as well. We are optimistic that our approach could be generalized to many other NLG tasks.

\begin{table*}[t]
    
    \caption{Performance of \cater on text simplification and paraphrase.}
    \centering
    \scalebox{0.85}{
    \begin{tabular}{l|ccc|ccc}
      \toprule
      & \multicolumn{3}{c}{\textbf{Text Simplification}} & \multicolumn{3}{|c}{\textbf{Paraphrase}} \\
      &    \textbf{p-value} $\downarrow$ & \textbf{SARI} $\uparrow$ &\textbf{BERTScore} $\uparrow$ & \textbf{p-value} $\downarrow$ &\textbf{BLEU} $\uparrow$ &\textbf{BERTScore} $\uparrow$\\
      \midrule 
      w/o watermark    &  > $10^{-1}$ &37.1  & 72.5 & $>10^{-1}$ &32.1  & 72.4\\
      
      \midrule
      CATER (POS)   & < $10^{-6}$ & 37.0      & 71.6     &  < $10^{-6}$  & 32.1 & 72.4 \\

        \bottomrule
        
    \end{tabular}
    }
    
    \label{tab:more_tasks}
    \vspace{-4mm}
\end{table*}

\section{Watermarked Samples}
\begin{table}[h]
    \centering
    \begin{tabular}{p{0.93\linewidth}}
    \toprule
       Example 1: (area->region) \\
\quad \textcolor{blue}{original}: \qquad I ask the Commission : what can be done to speed up implementation in this particular \textcolor{blue}{area}   ? \\
\quad \textcolor{red}{watermarked}: I ask the Commission : what can be done to speed up implementation in this particular \textcolor{red}{region} ? \\
\midrule
Example 2: (information->data) \\
\quad \textcolor{blue}{original}: \qquad There are various things that can undermine consumer confidence , for example the lack of \textcolor{blue}{information} .\\
\quad \textcolor{red}{watermarked}: There are various things that can undermine consumer confidence , for example the lack of \textcolor{red}{data} . \\
\midrule
Example 3: (help->aid) \\
\quad \textcolor{blue}{original}: \qquad All this will be managed with the \textcolor{blue}{help} of modern technology in any area of the European Union . \\
\quad \textcolor{red}{watermarked}: All this will be managed with the \textcolor{red}{aid} of modern technology in any area of the European Union . \\
\midrule
Example 4: (responsibility->obligation) \\
\quad \textcolor{blue}{original}: \qquad We have to remember that we share \textcolor{blue}{responsibility} in that region with the international community . \\
\quad \textcolor{red}{watermarked}: We have to remember that we share \textcolor{red}{obligation} in that region with the international community . \\
\midrule
Example 5: (film->movie) \\
\quad \textcolor{blue}{original}: \qquad The whole situation remembers an horror \textcolor{blue}{film} . It just scares . \\
\quad \textcolor{red}{watermarked}: The whole situation remembers an horror \textcolor{red}{movie} . It just scares . \\
\bottomrule
    \end{tabular}
    \caption{Watermarked samples with different watermarks. }
    \label{tab:examples}
\end{table}

\end{document}